\documentclass[a4paper,twocolumn,10pt,unpublished]{quantumarticle}
\pdfoutput=1
\usepackage[utf8]{inputenc}
\usepackage[T1]{fontenc}
\usepackage[english]{babel}
\usepackage{amsmath,amssymb,amsthm,mathtools}
\usepackage{graphicx}
\usepackage{booktabs}
\usepackage{tabularx}
\usepackage[numbers,sort&compress]{natbib}
\usepackage{microtype}
\usepackage{placeins}
\usepackage{hyperref}

\providecommand{\tightlist}{%
  \setlength{\itemsep}{0pt}\setlength{\parskip}{0pt}}

\numberwithin{equation}{section}
\newtheorem{theorem}{Theorem}[section]
\newtheorem{lemma}[theorem]{Lemma}
\newtheorem{proposition}[theorem]{Proposition}
\newtheorem{corollary}[theorem]{Corollary}

\theoremstyle{remark}

\begin{document}

\title{Quantum Change Interval: Exact Asymptotics for Minimum Error Localization}

\author{Xu~Chen}
\orcid{0009-0003-8499-4080}
\email{cnc@hebust.edu.cn}
\affiliation{School of Sciences, Hebei University of Science and Technology,
Shijiazhuang, Hebei 050018, People's Republic of China}
\author{Xue~Ma}
\email{maxue@he.chinamobile.com}
\affiliation{Network Management Center, China Mobile Communications Group
Hebei Co., Ltd., Shijiazhuang, Hebei 050000, People's Republic of China}

\maketitle
\hypersetup{
  pdftitle={Quantum Change Interval: Exact Asymptotics for Minimum Error Localization},
  pdfauthor={Xu Chen and Xue Ma}
}

\begin{abstract}
We study a returning quantum change interval, in which a source emits
\(\lvert\psi\rangle\) over one interval and
\(\lvert0\rangle\) elsewhere. A collective measurement on the full sequence identifies both endpoints under the minimum error criterion. We analyze the Gram matrix through Toeplitz comparison and Følner
transfer, together with an exact decomposition by excitation number and
interval hull. The resulting bounds prove that the square root measurement(SRM) attains the Bayes optimum asymptotically. Let \(c=\lvert\langle0\vert\psi\rangle\rvert\) be the local state overlap, and define \(p_1(x)=4(1-x^2)K^2(x^2)/\pi^2\) with \(K\) the complete elliptic integral of the first kind. For a known interval length \(i\), the SRM and Bayes optimal success probabilities converge to the same Toeplitz symbol integral as the number \(N\) of translations grows. Their difference satisfies
\(
P_{\rm opt}(G_{N,i})-P_{\rm SRM}(G_{N,i})
=O_{i,c}(N^{-1/2}).
\)
If both \(i\) and \(N\) diverge, their common limit is \(p_1(c^2)\), with no
constraint on their relative growth. If the interval length is unknown, the uniform prior over all \(M_n=n(n+1)/2\) intervals gives the common fixed overlap limit \(p_1(c)^2\). We derive the uniform scaling law
\[
M_nP_X
=
\left(
1+\frac{2\sqrt{\tau_n}}{\pi}
+\frac{\sqrt{2}\,\tau_n}{\pi^2}
\right)^2
+
O_T\!\left(\frac{\log\log n}{\log n}\right),
\]
\(X\in\{\mathrm{tr},\mathrm{SRM},\mathrm{opt}\}\). More generally, if \(c_n\) approaches one from below and \(n p_1(c_n)\to\infty\), the same
three quantities satisfy \(P_X\sim p_1(c_n)^2\). The SRM is therefore
asymptotically optimal. The asymptotic laws also extend to joint detection and exact localization in the presence of no change prior.

\end{abstract}

\section{Introduction}\label{introduction}

Quantum change point inference asks where an ordered source switches from one
state preparation process to another. A quantum receiver may store the emitted
systems and measure the entire sequence collectively rather than deciding from
each output in turn. Early Bayesian work considered unknown pure states and
two candidate change locations \cite{akimoto2011}. For a permanent change
between known pure states, collective minimum error discrimination has a
nonzero asymptotic success probability and can outperform online strategies
based on local measurements \cite{sentis2016}. Protocols for exact and online
unambiguous identification were developed later
\cite{sentis2017,sentis2018}, and an adaptive
Bayesian protocol was demonstrated with photonic state sequences
\cite{yu2018}. Quickest and sequential quantum change detection instead optimize a false alarm--delay tradeoff
\cite{fanizza2023,guha2025,zecchin2026}. Hamiltonian change points have been
studied as finite quantum process discrimination problems under a
minimum error criterion \cite{nakahira2023}, whereas unambiguous channel
discrimination permits an inconclusive outcome \cite{nakahira2026}. Quantum change point identification has also been studied under local
operations and classical communication, where it is connected to
entanglement distillation \cite{banerjee2024}. General Gram matrix
semidefinite programs cover multiple changes \cite{mohan2023}, while related
localization problems include malfunctioning devices, multiple anomalies, and
boundaries between quantum domains
\cite{skotiniotis2024,llorens2024,llorens2025}.

\begin{table*}[t]
\centering
\caption{Comparison with related quantum and classical models.}
\label{tab:literature}
\footnotesize
\begin{tabularx}{\textwidth}{@{}
>{\raggedright\arraybackslash}p{0.17\textwidth}
>{\raggedright\arraybackslash}p{0.24\textwidth}
>{\raggedright\arraybackslash}X@{}}
\toprule
Work & Relation & Established result and distinction from the present work \\
\midrule
Sentís et al.~\cite{sentis2016,sentis2017,sentis2018}
& Boundary slice \(b=n\)
& minimum error law for a permanent change, with separate unambiguous and
online protocols. A returning interval couples the entry and return
boundaries. \\
Mohan--Sikora--Upadhyay \cite{mohan2023}
& Specialization with two changes after a common leading vacuum factor:
\(N'=n+1,c_1=a,c_2=b+1\), and states
\(\lvert0\rangle,\lvert\psi\rangle,\lvert0\rangle\). Removing the common
factor preserves the Gram matrix; after phase fixing, the pairwise overlaps
are \((\gamma_1,\gamma_2,\gamma_{12})=(c,1,c)\)
& A finite size Bayes SDP based on the Gram matrix and a piecewise kernel. The
present analysis derives fixed overlap and moving overlap asymptotic laws,
including a sufficient hull dominance condition, and proves SRM asymptotic
optimality for the full physical ensemble. \\
Quantum multi-anomaly detection \cite{llorens2024}
& Consecutive \(k=i\) subensemble; strict for \(1<i<n\)
& Johnson symmetry gives finite \(n\) SRM optimality for the full
multi-anomaly ensemble. Its unambiguous law \((1-c^2)^k\) has a local protocol
and is also the asymptotic minimum error limit for fixed \(k\). Contiguity
removes Johnson transitivity. \\
Quantum edge detection \cite{llorens2025}
& Analogue with one boundary and unknown domain states
& An SRM protocol is asymptotically optimal. The returning model has two
jointly inferred endpoints, calibrated states, and equal exterior regions. \\
Classical structured change statistics
\cite{levin1985,yao1993,walther2010,chen2015}
& Epidemic and changed interval models are geometric analogues; spatial scans
and graph based statistics are related structured detection methods
& Scan, CUSUM, and graph based methods use classical significance or power
criteria rather than exact Bayes localization by collective POVMs. \\
\bottomrule
\end{tabularx}
\end{table*}

In this work, the change is transient. The source emits a known anomalous pure
state \(\lvert\psi\rangle\) on one nonempty interval and then
returns to the reference state \(\lvert0\rangle\). We call this returning
structure a \emph{quantum change interval}. For a sequence of length
\(n\) and an interval \(I=[a,b]\), the corresponding hypothesis is
\[
\lvert\Phi_{a,b}\rangle
=\lvert0\rangle^{\otimes(a-1)}
 \lvert\psi\rangle^{\otimes(b-a+1)}
 \lvert0\rangle^{\otimes(n-b)}.
\]
We write \(i=b-a+1\) for the interval length,
\(c=\lvert\langle0\vert\psi\rangle\rvert\) for the local state overlap, and
\(r=c^2\). When the interval length \(i\) is known, there are \(N=n-i+1\) admissible translations
\(a=1,\dots,N\), so only the starting point \(a\) must be inferred.
When the interval length is unknown, both endpoints must be inferred. Figure~\ref{fig:model-geometry}
shows the returning geometry and the collective localization task.
The permanent change model is the boundary slice \(b=n\), while fixed length
intervals form a consecutive anomaly subensemble with tied endpoints. Quantum
edge detection has related boundary geometry, but here both states are
calibrated and the return boundary is explicit.

\begin{figure*}[t]
  \centering
  \includegraphics[width=\textwidth]{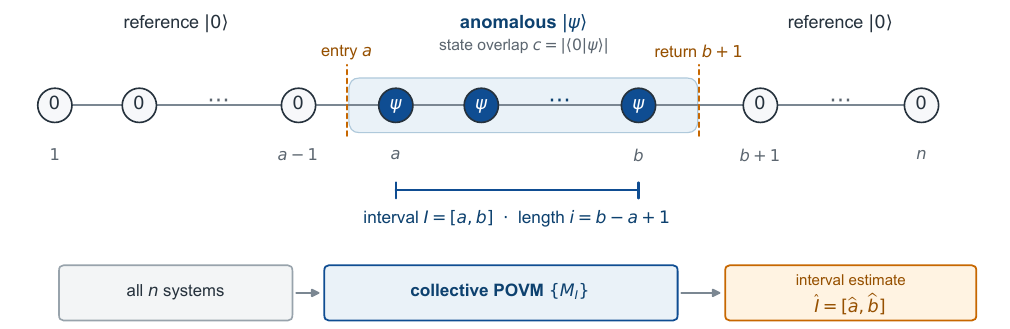}
  \caption{\textbf{Quantum change interval and collective localization.}
  A source emits the reference state \(\lvert0\rangle\) except on a nonempty
  contiguous interval \(I=[a,b]\), where it emits the known anomalous state
  \(\lvert\psi\rangle\). The entry and return boundaries are \(a\) and
  \(b+1\). For known length, \(i=b-a+1\) is fixed and only \(a\) is inferred;
  when the interval length is unknown, both endpoints are inferred. An
  arbitrary collective POVM may act on all \(n\) systems and return an interval estimate
  \(\widehat I\). Success requires \(\widehat I=I\). The local state overlap
  is \(c=\lvert\langle0\vert\psi\rangle\rvert\), so the states are
  nonorthogonal when \(0<c<1\).}
  \label{fig:model-geometry}
\end{figure*}
Success in this paper means exact recovery of the interval label:
\(\widehat I=I\). A minimum error positive operator valued measure (POVM)
reports an interval on every run and may be incorrect. By contrast, unambiguous exact identification allows for an inconclusive outcome but requires every
conclusive report to be correct \cite{sentis2017,llorens2024}.
Certified answer discrimination instead limits the distance between the
reported and true ordered labels \cite{martinez2019}. We study exact label
localization under the minimum error criterion and later add a hypothesis
\(H_0\) of no change.

For an unknown interval length, the Gram geometry is only locally
two-dimensional Toeplitz. Intersecting or adjacent intervals have a separable
kernel determined by endpoint displacement, while a positive gap introduces
an additional correction. The physical Gram matrix therefore lies beyond a
global two-dimensional Toeplitz compression or a transitive association
scheme, and its analysis requires more than a direct multilevel Szeg{\H o}
limit. Mohan, Sikora, and Upadhyay formulate general discrimination with two changes using
a finite Gram matrix and a Bayes SDP; their framework contains the
specialization in Table~\ref{tab:literature} \cite{mohan2023}. Here we derive
the fixed and moving overlap asymptotic laws for that returning interval
geometry and compare the explicit square root measurement (SRM) with the Bayes optimum.

\paragraph{Main results.}
Let \(p_1(c)\) be the one-dimensional Poisson kernel Toeplitz functional
defined in Eq.~\eqref{eq:p1-definition}, \(X\in\{\mathrm{tr},\mathrm{SRM},\mathrm{opt}\}\).

\begin{enumerate}
\def\labelenumi{\arabic{enumi}.}
\tightlist
\item
  Theorem~\ref{thm:fixed-length} gives the common Toeplitz limit for every fixed
  known length \(i\geq1\) and \(0\leq c\leq1\), together with an
  \(O_{i,c}(N^{-1/2})\) optimum--SRM gap for \(0<c<1\).
\item
  Theorem~\ref{thm:long-known} shows that the common limit is \(p_1(c^2)\)
  whenever the known length \(i_n\) and the number of translations
  \(N_n=n-i_n+1\) both diverge.
\item
  Theorem~\ref{thm:unknown-length} gives the common fixed overlap limit
  \(p_1(c)^2\) under the prior uniform over all
  \(M_n=n(n+1)/2\) intervals.
\item
  Theorem~\ref{thm:unknown-length-ultracritical} resolves the compact boundary
  window \(\tau_n=n(1-c_n)(\log n)^2=O(1)\), including a uniform logarithmic
  error estimate.
\item
  Theorem~\ref{thm:unknown-length-hull-dominance} extends the fixed overlap law to moving overlaps: \(P_X(G_n(c_n))\sim p_1(c_n)^2\) whenever \(c_n\to1\) from below and \(n p_1(c_n)\to\infty\).
\end{enumerate}

Theorem~\ref{thm:h0-localization} extends these results to include a
no change hypothesis. For a fixed prior \(\pi_0\), the conditional
localization limit \(\Lambda\) becomes
\(\pi_0+(1-\pi_0)\Lambda\). For fixed \(c<1\), the prior weighted SRM attains
this joint limit for both a growing known length and an unknown interval
length. At \(c=1\), maximum a posteriori postprocessing recovers the Bayes
optimum.

\paragraph{Proof strategy.}
At fixed overlap, Toeplitz comparison kernels are transferred to the physical
Gram matrices through local diagonal and perturbation estimates. For moving
overlaps, an exact weighted hull decomposition is combined with block, tail,
and endpoint macroblock estimates. The detailed Følner, transfer, Volterra,
triangular, and sector normalization arguments are organized in the
Supplemental Material. The two effective parameters have a geometric origin:
a unit translation of a long known length interval changes two local factors
and produces \(c^2\), whereas the two endpoint coordinates for unknown length
each carry a kernel with parameter \(c\). Full physical Gram SRM calculations
and small floating-point semidefinite programs provide finite size diagnostics.

\paragraph{Assumptions and scope.}
We consider conditionally independent pure state outputs with one nonempty
returning interval, known states, the stated uniform priors,
and unrestricted collective POVMs. The fixed overlap limits are pointwise
in \(c\), whereas
Section~\ref{a-nonuniform-boundary-at-unit-overlap} treats moving \(c_n\);
the no change prior \(\pi_0\) is fixed whenever \(H_0\) is included. These
results benchmark coherent storage of the full batch, followed by collective
readout. Extensions to mixed or correlated outputs, unknown anomaly states,
imperfect returns, restricted measurements, or multiple intervals require
separate analysis.

Table~\ref{tab:literature} situates the returning interval model relative to
permanent and multiple quantum changes, multi anomaly and edge detection
problems, and classical changed interval geometries. It records the exact
parameter maps and distinguishes the present collective POVM Bayes
localization task from unambiguous, online, and classical detection criteria.

\section{Model and preliminaries}\label{model-and-discrimination-preliminaries}

\subsection{Phase convention and Gram matrices}\label{phase-convention-and-gram-matrices}

Write

\[
\langle0|\psi\rangle=ce^{i\varphi},\qquad 0\le c\le1,
\qquad r=c^2.
\]

Replacing \(|\psi\rangle\) by \(e^{-i\varphi}|\psi\rangle\) changes each
candidate vector only by a global phase that depends on its label. The original
and phase fixed Gram matrices are therefore related by diagonal unitary
conjugation, which leaves the minimum error and SRM success probabilities
unchanged. We use

\[
\langle0|\psi\rangle=c\ge0
\]

without loss of generality.

Unless a theorem explicitly introduces a size-dependent model parameter,
such as \(i_n\) or \(c_n\), every asymptotic statement is taken as
\(n\to\infty\) with \(c\) (and hence \(r=c^2\)) fixed and with any stated
interval length \(i\) fixed. In
Section~\ref{joint-detection-and-exact-localization}, the no change prior
\(\pi_0\in(0,1)\) is also fixed rather than scaled with \(n\).

For an ensemble of \(M\) normalized pure states \(\{|\phi_j\rangle\}_{j=1}^M\)
under the uniform prior, let

\[
G_{jk}=\langle\phi_j|\phi_k\rangle.
\]

We write \(P_{\rm opt}(G)\) for the optimal minimum error success probability
over all POVMs
\cite{helstrom1976,holevo1973,yuen1975,barnett2009} and

\[
P_{\rm SRM}(G)=\frac1M\sum_{j=1}^M(\sqrt G)_{jj}^2
\]

for the square root measurement success probability
\cite{hausladen1994,eldar2001,dallapozza2015}. If \(G\) is singular,
\(G^{-1/2}\) is understood on the support of \(G\), and the POVM is completed
arbitrarily on its orthogonal complement; this does not change any ensemble
success probability. The SRM is an explicit collective POVM determined by the
candidate states and their average density operator. Evaluating it at a given
size therefore does not require solving the Bayes semidefinite program. The
elementary trace bound

\begin{equation}\label{eq:trace-srm-bound}
\left(\frac{\operatorname{tr}\sqrt G}{M}\right)^2
\le P_{\rm SRM}(G)\le P_{\rm opt}(G)
\end{equation}

will be used repeatedly.

\subsection{Supporting analytic estimates}\label{supporting-analytic-estimates}

The local diagonal Følner limit, operator norm Gram stability estimate, and
Gram dual bound used below are stated and proved in the Supplemental Material.
They transfer square root diagonal limits from positive convolution
compressions to the physical Gram ensembles, including singular Gram matrices.

\section{A known interval length}\label{a-known-interval-length}

Let the anomalous interval have known length \(i\ge1\). There are

\[
N=n-i+1
\]

admissible starting points, and the candidates under the uniform prior are

\begin{align*}
|\Phi_a^{(i)}\rangle
={}&|0\rangle^{\otimes(a-1)}|\psi\rangle^{\otimes i}
|0\rangle^{\otimes(n-a-i+1)}, \\
&a=1,\ldots,N.
\end{align*}

Their exact Gram matrix is

\begin{equation}\label{eq:fixed-gram}
(G_{N,i})_{ab}=r^{\min(|a-b|,i)}.
\end{equation}

Define

\begin{equation}\label{eq:fixed-symbol}
f_{i,r}(\theta)
=1-r^i+2\sum_{d=1}^{i-1}(r^d-r^i)\cos(d\theta).
\end{equation}

\subsection{Fixed length limit}\label{fixed-length-limit}

\begin{theorem}[Fixed length localization]\label{thm:fixed-length}
For every fixed integer \(i\ge1\) and \(0\le r\le1\),

\begin{equation}\label{eq:fixed-limit}
\lim_{N\to\infty}P_{\rm opt}(G_{N,i})
=\lim_{N\to\infty}P_{\rm SRM}(G_{N,i})
=P_\infty(i,r),
\end{equation}

where

\begin{equation}\label{eq:fixed-integral}
P_\infty(i,r)
=\left[
\frac1{2\pi}\int_0^{2\pi}\sqrt{f_{i,r}(\theta)}\,d\theta
\right]^2.
\end{equation}

For fixed \(i\) and \(0<c<1\),

\begin{equation}\label{eq:fixed-gap-rate}
0\le P_{\rm opt}(G_{N,i})-P_{\rm SRM}(G_{N,i})
=O_{i,c}(N^{-1/2}).
\end{equation}
\end{theorem}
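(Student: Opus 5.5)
The plan is to sandwich both success probabilities between two quantities that converge to the same Toeplitz symbol integral: a lower bound from the trace inequality \eqref{eq:trace-srm-bound}, and an upper bound from a Gram dual certificate. First, I identify \(G_{N,i}\) with the Toeplitz matrix \(T_N(g)\) whose symbol is
\[
g(\theta)=\sum_{d\in\mathbb Z} r^{\min(|d|,i)}e^{id\theta}.
\]
For \(r<1\), the tail \(r^i\) is constant beyond \(|d|\ge i\). Hence \(G_{N,i}=r^i J_N+T_N(f_{i,r})\), where \(J_N\) is the all-ones matrix and \(f_{i,r}\) is the trigonometric polynomial \eqref{eq:fixed-symbol}. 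Since \(f_{i,r}\) is the symbol of \(G-r^iJ\), it equals the Fourier transform of \((r^{\min(|d|,i)}-r^i)_d\). I then check that \(f_{i,r}\ge0\). This holds because \(f_{i,r}(\theta)=(1-r)\sum_{k=0}^{i-1}r^k F_k(\theta)\)-type decompositions express it as a positive combination of Fejér kernels: the sequence \(r^{\min(|d|,i)}-r^i\) is convex and decreasing on \(d\ge0\) and vanishes for \(d\ge i\), so a Pólya-type argument applies. The rank-one term \(r^iJ_N\) is a perturbation of rank one. The cases \(r=0\) (identity, limit \(1\)) and \(r=1\) (\(G=J\), limit \(0=P_\infty\), since \(f\equiv0\)) are trivial and will be treated separately.

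For the lower bound, I use \(\operatorname{tr}\sqrt{G}\). The square root is operator monotone and concave, so \(\operatorname{tr}\sqrt{A+B}\ge\operatorname{tr}\sqrt A\) for \(B\ge0\), giving \(\operatorname{tr}\sqrt{G_{N,i}}\ge\operatorname{tr}\sqrt{T_N(f_{i,r})}\). By Szegő's first limit theorem for the continuous nonnegative symbol \(f_{i,r}\) (or the local diagonal Følner limit from the Supplemental Material), \(N^{-1}\operatorname{tr}\sqrt{T_N(f)}\to\frac1{2\pi}\int\sqrt f\). Combined with \eqref{eq:trace-srm-bound}, this yields
\[
\liminf_N P_{\rm SRM}(G_{N,i})\ge P_\infty(i,r).
\]

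For the upper bound, I use the Gram dual bound: for any \(Y\ge0\) with \(Y\ge\) (in the appropriate sense) each rank-one \(\frac1M G^{1/2}e_je_j^TG^{1/2}\), we have \(P_{\rm opt}\le\operatorname{tr}Y\). I take the candidate \(Y=\frac{s}{N}\sqrt{G}\)-like certificate built from the circulant (periodized) comparison matrix \(C_N\) with symbol \(f_{i,r}\), for which all diagonal entries of \(\sqrt{C_N}\) equal \(\bar s=\frac1{2\pi}\int\sqrt f\) up to \(o(1)\). For the circulant, SRM is optimal by translation symmetry, and \(P_{\rm opt}(C_N)=\big(\frac1N\operatorname{tr}\sqrt{C_N}\big)^2\). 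I then transfer this to \(G_{N,i}\) using the operator norm Gram stability estimate. The perturbation \(G-C_N\) is supported on an \(O(i)\times O(i)\) corner block plus a rank-one term, and the Følner transfer gives \(P_{\rm opt}(G)\le P_{\rm opt}(C_N)+o(1)\), hence \(\limsup P_{\rm opt}\le P_\infty\).

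The main obstacle is the quantitative rate \(O(N^{-1/2})\). For \(0<c<1\), the symbol \(f_{i,r}\) is a trigonometric polynomial that is strictly positive, since \(1-r>0\) makes the Fejér combination positive everywhere. Therefore \(T_N(f)\ge(\min f)I\), and \(\sqrt{T_N(f)}\) has exponentially decaying off-diagonal entries. This makes \((\sqrt G)_{aa}\) equal to \(\bar s\) up to exponentially small errors except within \(O(1)\) of the boundary. The trouble is the rank-one term \(r^iJ_N\): its effect on \(\sqrt G\) is a rank-one correction whose diagonal entries are of size \(\sim\langle e_a,v\rangle^2\) with \(v\propto G^{-1/4}\mathbf 1\)-type, giving an average contribution \(O(N^{-1/2})\) to the success probability. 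My first attempt will be an explicit resolvent formula for the rank-one update of the square root, using \(\sqrt{A}=\frac1\pi\int_0^\infty t^{-1/2}A(A+t)^{-1}dt\) and Sherman–Morrison. That formula makes the \(N^{-1/2}\) scale visible. Pairing it with a dual certificate that matches \(\sqrt G\) up to the same order then yields \(0\le P_{\rm opt}-P_{\rm SRM}=O_{i,c}(N^{-1/2})\).
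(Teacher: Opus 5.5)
Your lower bound is sound, and for that half it is simpler than the paper's route. The ingredients $G_{N,i}\ge T_N(f_{i,r})$, operator monotonicity of the square root, Szeg{\H o}'s first limit theorem and \eqref{eq:trace-srm-bound} give $\liminf_N P_{\rm SRM}(G_{N,i})\ge P_\infty(i,r)$. Your P\'olya--Fej\'er decomposition is also a valid substitute for the paper's basis argument that $\min f_{i,r}>0$.

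The gap is the upper-bound transfer $P_{\rm opt}(G_{N,i})\le P_{\rm opt}(C_N)+o(1)$, and with it the rate. Neither tool you cite applies here:
\begin{itemize}
\item For $i\ge2$ the wraparound corners alone contribute operator norm of order one, not $o(1)$.
\item Leaving $r^iJ_N$ in the perturbation, as you propose, adds operator norm $r^iN$.
\item The F{\o}lner comparison transfer needs operator-norm convergence to a compression of an $\ell^1$ kernel, whereas $r^{\min(|d|,i)}$ has a constant, non-summable tail.
\end{itemize}
Nor is the common component a small perturbation of the states. Each candidate carries amplitude $r^{i/2}$ along $|0\rangle^{\otimes n}$, and the normalized ensemble with that component removed has limit $P_\infty(i,r)/(1-r^i)$ by the residual Toeplitz lemma, not $P_\infty(i,r)$.

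Your rate paragraph does not repair this. For $i\ge2$, $\sqrt{T_N(f_{i,r})+r^iJ_N}-\sqrt{T_N(f_{i,r})}$ is not rank one, because $\mathbf 1$ is not an eigenvector of $T_N(f_{i,r})$. The dual certificate ``matching $\sqrt G$ to the same order'' is also never specified. The natural candidate $\beta\sqrt{G}/N$ is feasible only for $\beta\ge\max_j(\sqrt G)_{jj}$, so the boundary labels would still need separate control.

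The missing idea is to absorb $r^iJ_N$ into the comparison matrix. Because $\mathbf 1$ is the zero Fourier mode, $C_{N,i}=r^iJ_N+\widetilde B_{N,i}$ is still a circulant, unit-diagonal, positive definite Gram matrix. Then $G_{N,i}-C_{N,i}$ consists only of the two corner blocks, with rank at most $2(i-1)$ and trace norm at most $K_{i,r}$. Put $\Delta_N=\sqrt{G_{N,i}}-\sqrt{C_{N,i}}$. With the common spectral gap, the Sylvester integral gives $\|\Delta_N\|_1=O_{i,r}(1)$.

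The transfer is then done in trace norm and averaged over labels, not in operator norm. In the canonical Gram realizations, each success term changes by at most $\|\Delta_Ne_j\|_2$ for every POVM. Cauchy--Schwarz therefore gives $|P_{\rm opt}(G_{N,i})-P_{\rm opt}(C_{N,i})|\le\|\Delta_N\|_{\rm HS}/\sqrt N$, while the SRM values differ by at most $2\|\Delta_N\|_1/N$. The SRM is exactly optimal for $C_{N,i}$, with dual $Y=\beta_N\sqrt{C_{N,i}}/N$ and $\beta_N=N^{-1}\operatorname{tr}\sqrt{C_{N,i}}$. This yields the limit and the $O_{i,c}(N^{-1/2})$ gap together. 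The $N^{-1/2}$ comes from averaging the bounded corner defect, not from the rank-one term, which the circulant diagonalizes exactly.

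If you prefer to keep $r^iJ_N$ outside, you need a measurement-level bound instead. With $|\Omega\rangle=|0\rangle^{\otimes n}$, the relation $\sum_a\langle\Omega|\Pi_a|\Omega\rangle\le1$ and Cauchy--Schwarz show that the common component adds at most $O(N^{-1/2})$ to $(1-r^i)P_{\rm opt}(R_{N,i})$. You would still need the trace-norm corner transfer for the residual Toeplitz matrix.
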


\begin{proof}
\emph{Residual gap.}
Put \(q=r^i\) and decompose

\[
G_{N,i}=qJ_N+B_{N,i},
\]

where \(B_{N,i}=T_N(f_{i,r})\) has entries

\[
(B_{N,i})_{ab}=
\begin{cases}
r^{|a-b|}-r^i,&|a-b|<i,\\
0,&|a-b|\ge i.
\end{cases}
\]

Choose a local basis with
\(|\psi\rangle=\sqrt r\,|0\rangle+\sqrt{1-r}\,|1\rangle\). After removing the
common all-zero component, each candidate has a unique basis component with
squared amplitude \((1-r)^i\). Hence

\begin{equation}\label{eq:residual-spectral-gap}
B_{N,i}\ge(1-r)^iI_N.
\end{equation}

The corresponding infinite residual convolution operator has symbol
\(f_{i,r}\). Applying \eqref{eq:residual-spectral-gap} to arbitrarily long
finitely supported vectors gives

\begin{equation}\label{eq:symbol-spectral-gap}
f_{i,r}(\theta)\ge(1-r)^i.
\end{equation}

\emph{Circulant comparison.}
For \(N\ge2i\), let \(\widetilde B_{N,i}\) be the circulant matrix with the same local coefficients and set

\[
C_{N,i}=r^iJ_N+\widetilde B_{N,i}.
\]

Away from the uniform Fourier mode, its eigenvalues are the sampled values
\(f_{i,r}(2\pi k/N)\). The uniform mode receives an additional positive
rank-one contribution. Thus \(C_{N,i}\) is a positive definite, unit-diagonal
circulant Gram matrix. The matrices \(B_{N,i}\) and \(\widetilde B_{N,i}\)
differ only in the upper-right and lower-left wraparound corner blocks.

To make the boundary estimate explicit, set

\begin{align*}
b_d&=r^d-r^i,\\
E_{N,d}&=\sum_{a=1}^{d}e_a e_{N-d+a}^*,
\qquad 1\le d<i.
\end{align*}

Here \(E_{N,d}\) is a rank-\(d\) partial permutation from the last \(d\)
coordinates to the first \(d\) coordinates, and

\begin{equation}\label{eq:corner-partial-permutation}
\widetilde B_{N,i}-B_{N,i}
=\sum_{d=1}^{i-1}b_d(E_{N,d}+E_{N,d}^*).
\end{equation}

For \(N\ge2i\), the initial and terminal coordinate subspaces in
\eqref{eq:corner-partial-permutation} are disjoint. Consequently
\(\|E_{N,d}+E_{N,d}^*\|_1=2d\). Moreover, the whole difference maps between
the span of the first \(i-1\) coordinates and the span of the last \(i-1\)
coordinates. The triangle inequality and this two-corner support therefore give

\begin{equation}\label{eq:circulant-trace-perturbation}
\|G_{N,i}-C_{N,i}\|_1
\le K_{i,r}:=2\sum_{d=1}^{i-1}d(r^d-r^i),
\end{equation}

\begin{equation}\label{eq:circulant-rank-perturbation}
\operatorname{rank}(G_{N,i}-C_{N,i})\le2(i-1).
\end{equation}

The Sylvester integral gives

\begin{equation}\label{eq:sqrt-trace-perturbation}
\|\sqrt{G_{N,i}}-\sqrt{C_{N,i}}\|_1
\le\frac{K_{i,r}}{2(1-r)^{i/2}}=O_{i,r}(1).
\end{equation}

\emph{Exact circulant optimum.}
Because \(C_{N,i}\) and \(\sqrt{C_{N,i}}\) are circulant, the diagonal of \(\sqrt{C_{N,i}}\) is constant. If

\[
\beta_N=\frac1N\operatorname{tr}\sqrt{C_{N,i}},
\]

then the SRM value is \(\beta_N^2\). The Yuen--Kennedy--Lax dual operator

\[
Y=\frac{\beta_N}{N}\sqrt{C_{N,i}}
\]

is feasible because, with \(S=\sqrt{C_{N,i}}\),

\[
\beta_NS-Se_je_j^*S\ge0.
\]

Indeed, since \(S\succ0\), congruence by \(S^{-1/2}\) reduces this to

\[
\beta_N I-S^{1/2}e_je_j^*S^{1/2}\ge0.
\]

The rank-one term has the unique nonzero eigenvalue
\(e_j^*Se_j=\beta_N\), because the diagonal of \(S\) is constant. Since
\(\operatorname{tr}Y=\beta_N^2\), the SRM is exactly optimal for the
circulant model.

\emph{Transfer and spectral limit.}

Let

\[
\tau_N=\|\sqrt{G_{N,i}}-\sqrt{C_{N,i}}\|_1.
\]

Put \(\Delta_N=\sqrt{G_{N,i}}-\sqrt{C_{N,i}}\) and use the canonical Gram
realizations of the two ensembles. Cauchy--Schwarz, followed by monotonicity
of the Schatten norms, gives

\begin{align}\label{eq:opt-circulant-transfer}
|P_{\rm opt}(G_{N,i})-P_{\rm opt}(C_{N,i})|
&\le\frac1N\sum_{j=1}^N\|\Delta_Ne_j\|_2 \notag\\
&\le\frac1{\sqrt N}
\left(\sum_{j=1}^N\|\Delta_Ne_j\|_2^2\right)^{1/2} \notag\\
&=\frac{\|\Delta_N\|_{\rm HS}}{\sqrt N}
\notag\\
&\le\frac{\|\Delta_N\|_1}{\sqrt N}
=\frac{\tau_N}{\sqrt N}.
\end{align}

The SRM diagonal formula also gives

\[
|P_{\rm SRM}(G_{N,i})-P_{\rm SRM}(C_{N,i})|
\le\frac{2\tau_N}{N}.
\]

Together with \eqref{eq:sqrt-trace-perturbation}, this proves
\eqref{eq:fixed-gap-rate}. Finally, the sampled eigenvalues form a Riemann
sum. The rank-one correction to the uniform mode changes
\(N^{-1}\operatorname{tr}\sqrt C\) by at most \(O(N^{-1/2})\). Therefore

\[
\frac1N\operatorname{tr}\sqrt{C_{N,i}}
\longrightarrow
\frac1{2\pi}\int_0^{2\pi}\sqrt{f_{i,r}(\theta)}\,d\theta.
\]

This proves \eqref{eq:fixed-limit}--\eqref{eq:fixed-integral} for \(0<r<1\). The cases \(r=0,1\) are handled below.
\end{proof}

\begin{lemma}[Residual Toeplitz family]
Fix \(i\ge1\) and \(0\le r<1\), put \(q=r^i\), and let

\[
R_{N,i}:=\frac{G_{N,i}-qJ_N}{1-q}
=T_N\!\left(\frac{f_{i,r}}{1-r^i}\right).
\]

Then \(R_{N,i}\) is the Gram matrix of normalized residual states and

\begin{equation}\label{eq:residual-toeplitz-limit}
\lim_{N\to\infty}P_{\rm opt}(R_{N,i})
=\lim_{N\to\infty}P_{\rm SRM}(R_{N,i})
=\frac{P_\infty(i,r)}{1-r^i}.
\end{equation}

Equivalently,
\((1-r^i)P_{\rm opt}(R_{N,i})\to P_\infty(i,r)\), and the same statement
holds with \(P_{\rm SRM}\).
\end{lemma}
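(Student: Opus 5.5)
We will treat $R_{N,i}$ exactly as $G_{N,i}$ was treated in the proof of Theorem~\ref{thm:fixed-length}, but with the rank-one term $qJ_N$ removed.

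\emph{Identification.} The Toeplitz identity is immediate from the decomposition $G_{N,i}=qJ_N+B_{N,i}$ with $B_{N,i}=T_N(f_{i,r})$. The diagonal of $B_{N,i}$ is $1-r^i$, so dividing by $1-q$ gives unit diagonal. For the residual-state interpretation, write $|\Phi_a^{(i)}\rangle=\sqrt q\,|\chi\rangle+|\rho_a\rangle$. Here $|\chi\rangle=|0\rangle^{\otimes n}$ up to the local basis of the previous proof, and $|\rho_a\rangle$ is the component orthogonal to the common all-zero direction. Concretely, expand $|\psi\rangle^{\otimes i}=\sqrt r^{\,i}|0\cdots0\rangle+(\text{rest})$ on the interval. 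Then $\langle\rho_a|\rho_b\rangle=G_{ab}-q$. Normalizing by $\sqrt{1-q}$ yields unit vectors with Gram matrix $R_{N,i}\ge0$. Moreover, \eqref{eq:residual-spectral-gap} gives $R_{N,i}\ge\frac{(1-r)^i}{1-r^i}I_N$, which is uniformly positive definite for $r<1$.

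\emph{Circulant comparison.} For $N\ge2i$, let $\widetilde R_{N,i}=\widetilde B_{N,i}/(1-q)$. This is circulant with unit diagonal, and its eigenvalues $f_{i,r}(2\pi k/N)/(1-r^i)$ are all at least $(1-r)^i/(1-r^i)>0$ by \eqref{eq:symbol-spectral-gap}. Unlike $C_{N,i}$, there is no rank-one correction at the uniform mode. The exact-circulant-optimum argument of Theorem~\ref{thm:fixed-length} carries over verbatim: constant diagonal of $\sqrt{\widetilde R}$ plus the Yuen--Kennedy--Lax dual $Y=(\beta_N/N)\sqrt{\widetilde R}$ show that $P_{\rm SRM}(\widetilde R_{N,i})=P_{\rm opt}(\widetilde R_{N,i})=\beta_N^2$. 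Here $\beta_N=N^{-1}\operatorname{tr}\sqrt{\widetilde R_{N,i}}$ is a Riemann sum for $\frac1{2\pi}\int_0^{2\pi}\sqrt{f_{i,r}/(1-r^i)}\,d\theta$, whose square is $P_\infty(i,r)/(1-r^i)$. Continuity of $\sqrt{f_{i,r}}$, a trigonometric polynomial bounded below, makes the convergence routine.

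\emph{Transfer.} The difference $\widetilde R_{N,i}-R_{N,i}$ is $(1-q)^{-1}$ times the corner sum \eqref{eq:corner-partial-permutation}. Its trace norm is therefore at most $K_{i,r}/(1-r^i)$, independent of $N$. Both matrices are bounded below by $(1-r)^i/(1-r^i)$, so the Sylvester integral bound used in \eqref{eq:sqrt-trace-perturbation} gives $\|\sqrt{R_{N,i}}-\sqrt{\widetilde R_{N,i}}\|_1=O_{i,r}(1)$. The Cauchy--Schwarz estimate \eqref{eq:opt-circulant-transfer} then gives $|P_{\rm opt}(R_{N,i})-P_{\rm opt}(\widetilde R_{N,i})|=O(N^{-1/2})$. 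The diagonal formula gives the SRM difference as $O(N^{-1})$. Combining these with the circulant limit proves \eqref{eq:residual-toeplitz-limit}.

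\emph{Edge case $r=0$.} Here $R_{N,i}=G_{N,i}=I_N$ and $f_{i,0}\equiv1$, so both sides equal $1$ trivially.

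\emph{Main obstacle.} The only delicate point is the uniform lower spectral bound. It is needed both for the square-root perturbation estimate and to ensure the circulant $\widetilde R_{N,i}$ is positive, since it lacks the compensating rank-one term that made $C_{N,i}$ positive. We take this bound from \eqref{eq:symbol-spectral-gap}, which already holds pointwise in $\theta$. This is why the lemma excludes $r=1$, where $q=1$ and the normalization degenerates.
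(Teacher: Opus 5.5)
Your proof is correct, but it takes a different route from the paper's. The paper settles this lemma in a few lines by citing general machinery from the Supplemental Material. Positivity and unit diagonal come from \eqref{eq:residual-spectral-gap}. The one-dimensional local diagonal F{\o}lner limit, applied to the compression of a convolution whose symbol \(h_{i,r}=f_{i,r}/(1-r^i)\) is bounded above and below, gives convergence in mean of the square-root diagonal to \((2\pi)^{-1}\int_0^{2\pi}\sqrt{h_{i,r}}\). The Gram dual bound then supplies the matching upper bound on \(P_{\rm opt}\), and \eqref{eq:trace-srm-bound} squeezes the SRM in between. You instead rerun the circulant comparison from the proof of Theorem~\ref{thm:fixed-length} on the residual matrix. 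The circulant \(\widetilde R_{N,i}=\widetilde B_{N,i}/(1-q)\) has spectrum exactly equal to the sampled symbol, with no uniform-mode correction, so its positivity follows from \eqref{eq:symbol-spectral-gap}. The constant-diagonal dual certificate makes its SRM exactly optimal. The \(N\)-independent trace-norm corner bound, the Sylvester integral, and the Cauchy--Schwarz transfer then move both criteria back to \(R_{N,i}\). Your version is self-contained within the main text and quantitative: it yields \(P_{\rm opt}(R_{N,i})-P_{\rm SRM}(R_{N,i})=O_{i,r}(N^{-1/2})\) and an explicit rate up to the Riemann-sum error. The paper's F{\o}lner route needs no periodic embedding with an exactly optimal comparison model. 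That is why the same template carries over to the triangular two-dimensional endpoint domain in Theorem~\ref{thm:unknown-length}, where no circulant comparison is available.
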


\begin{proof}
The diagonal of \(G_{N,i}-qJ_N\) is \(1-q\), and
\eqref{eq:residual-spectral-gap} shows that it is positive definite. Thus
\(R_{N,i}\) is a unit-diagonal Gram matrix. Its Toeplitz symbol is
\(h_{i,r}=f_{i,r}/(1-r^i)\), which is bounded above and below away from zero.
The one-dimensional local diagonal Følner limit, followed by the Gram dual
bound and \eqref{eq:trace-srm-bound}, gives the common
limit

\[
\left[\frac1{2\pi}\int_0^{2\pi}\sqrt{h_{i,r}(\theta)}\,d\theta\right]^2
=\frac{P_\infty(i,r)}{1-r^i}.
\]
\end{proof}

\subsection{Exact and special-function cases}\label{exact-and-special-function-cases}

\begin{corollary}[Intervals of one or two sites]
\label{cor:short-intervals}
\emph{One site.}
For \(i=1\),

\[
G_{N,1}=(1-r)I_N+rJ_N
\]

is permutation invariant. Its positive square root has constant diagonal, so
the preceding dual certificate proves finite size SRM optimality. Its
eigenvalues are \(1+(N-1)r\) once and \(1-r\) with multiplicity \(N-1\), so

\begin{align}\label{eq:one-site-finite}
P_{\rm opt}(G_{N,1})
&=P_{\rm SRM}(G_{N,1}) \notag\\
&=\frac{\bigl[\sqrt{1+(N-1)r}
+(N-1)\sqrt{1-r}\bigr]^2}{N^2}.
\end{align}

In particular, \(P_\infty(1,r)=1-r\).

This is the \(k=1\) specialization of the known single anomaly result in
Refs.~\cite{skotiniotis2024,llorens2024}, after identifying their common
off-diagonal Gram entry with \(r=c^2\). Thus the interval model reproduces the
established finite size formula at \(i=1\).

\par\medskip\noindent\emph{Two sites.}
With

\[
E(m)=\int_0^{\pi/2}\sqrt{1-m\sin^2u}\,du,
\]

one has

\begin{equation}\label{eq:two-site-elliptic}
P_\infty(2,r)
=\frac4{\pi^2}(1-r)(1+3r)
E^2\!\left(\frac{4r}{1+3r}\right).
\end{equation}
\end{corollary}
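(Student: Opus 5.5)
The plan is to treat the two parts separately. The one-site statement is a finite-size computation, closed by the dual certificate already used in the proof of Theorem~\ref{thm:fixed-length}. The two-site statement is a closed-form evaluation of the symbol integral \eqref{eq:fixed-integral} at \(i=2\).

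\emph{One site.} From \eqref{eq:fixed-gram} with \(i=1\) we get \(G_{N,1}=(1-r)I_N+rJ_N\). Conjugating by any permutation matrix fixes this matrix, and hence fixes \(\sqrt{G_{N,1}}\). So \(\sqrt{G_{N,1}}\) has constant diagonal, equal to \(\beta_N=N^{-1}\operatorname{tr}\sqrt{G_{N,1}}\).

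For \(0\le r<1\) the matrix \(G_{N,1}\) is positive definite. The argument in the ``Exact circulant optimum'' step then applies verbatim with \(S=\sqrt{G_{N,1}}\): the operator \(Y=(\beta_N/N)S\) is dual feasible and \(\operatorname{tr}Y=\beta_N^2\). Combined with \eqref{eq:trace-srm-bound}, this forces \(P_{\rm opt}=P_{\rm SRM}=\beta_N^2\).

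The spectrum is read off from \(J_N\). The uniform vector gives eigenvalue \(1+(N-1)r\), and its orthogonal complement gives \(1-r\) with multiplicity \(N-1\). This yields \eqref{eq:one-site-finite}. Dividing by \(N^2\) and letting \(N\to\infty\), the first square root contributes \(O(N^{-1/2})\), so the limit is \((\sqrt{1-r})^2=1-r\). This agrees with \(f_{1,r}\equiv1-r\) in \eqref{eq:fixed-integral}.

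The degenerate case \(r=1\) is handled directly. There \(G=J_N\), every candidate is the same state, and the success probability is \(1/N\). This equals the formula and tends to \(0=1-r\).

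\emph{Two sites.} By \eqref{eq:fixed-symbol},
\[
f_{2,r}(\theta)=A+B\cos\theta,\qquad A=1-r^2,\quad B=2r(1-r).
\]
Substitute \(\theta=2u\) and use \(\cos 2u=1-2\sin^2u\). This gives
\[
f_{2,r}=(A+B)\bigl(1-m\sin^2u\bigr),\qquad A+B=(1-r)(1+3r),\qquad m=\frac{2B}{A+B}=\frac{4r}{1+3r}.
\]
Since \(d\theta=2\,du\) and \(u\) runs over \([0,\pi]\), the integrand is symmetric about \(u=\pi/2\). Therefore
\[
\int_0^{2\pi}\sqrt{f_{2,r}}\,d\theta=4\sqrt{(1-r)(1+3r)}\;E(m).
\]
Dividing by \(2\pi\) and squaring gives \eqref{eq:two-site-elliptic}, because \(4^2/(2\pi)^2=4/\pi^2\).

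We have \(0\le m\le1\) on \([0,1]\), so \(E\) is real and finite there. At \(r=0\) the formula gives \(1\), and at \(r=1\) it gives \(0\). Both agree with the direct computations \(G=I\) and \(G=J\), which are the endpoint cases deferred in the proof of Theorem~\ref{thm:fixed-length}.

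\emph{Main obstacle.} The only delicate point is bookkeeping rather than analysis. The two-site formula must use the parameter convention \(E(m)\) with \(m\) the parameter, not the modulus, so the argument is \(4r/(1+3r)\) and not its square root. We also need the symmetry step that reduces \([0,\pi]\) to \(2\int_0^{\pi/2}\), and the endpoints \(r\in\{0,1\}\) must be checked separately since the limit \eqref{eq:fixed-limit} was only derived for \(0<r<1\).
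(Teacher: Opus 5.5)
Your proposal is correct and follows essentially the same route as the paper. For one site you use permutation invariance, the circulant dual certificate and the explicit spectrum of \((1-r)I_N+rJ_N\); for two sites you use the substitution \(u=\theta/2\) with \(\cos\theta=1-2\sin^2(\theta/2)\) and symmetry about \(u=\pi/2\). Your separate treatment of \(r=1\), where \(\sqrt{G_{N,1}}\) is singular and the congruence by \(S^{-1/2}\) is unavailable, and your endpoint checks of the elliptic formula are a small but welcome tightening of the paper's argument.
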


\begin{proof}[Derivation]
Equation~\eqref{eq:fixed-symbol} becomes

\begin{align*}
f_{2,r}(\theta)
&=(1-r)(1+r+2r\cos\theta) \\
&=(1-r)(1+3r-4r\sin^2(\theta/2)).
\end{align*}

With \(u=\theta/2\) and symmetry about \(\pi/2\),

\begin{align*}
&\frac1{2\pi}\int_0^{2\pi}\sqrt{f_{2,r}(\theta)}\,d\theta \\
&\qquad=\frac2\pi\sqrt{(1-r)(1+3r)}
E\!\left(\frac{4r}{1+3r}\right).
\end{align*}

Squaring proves \eqref{eq:two-site-elliptic}.
\end{proof}

\subsection{Endpoint parameters}\label{endpoint-parameters}

If \(r=0\), then \(G_{N,i}=I_N\), and both success probabilities equal one.
If \(r=1\), then \(G_{N,i}=J_N\), all hypotheses represent the same ray, and
both probabilities equal \(1/N\). These values agree with the continuous
limiting formulas. The spectral-gap arguments above apply only for
\(0<r<1\).

\section{A growing known length}\label{a-growing-known-length}

For \(0\le x<1\), define the Poisson kernel

\[
g_x(\theta)=\sum_{d\in\mathbb Z}x^{|d|}e^{id\theta}
=\frac{1-x^2}{1-2x\cos\theta+x^2}
\]

and

\begin{equation}\label{eq:p1-definition}
p_1(c)=\left[\frac1{2\pi}\int_0^{2\pi}\sqrt{g_x(\theta)}\,d\theta\right]^2
=\frac{4(1-x^2)}{\pi^2}K^2(x^2),
\end{equation}

where \(K\) is the complete elliptic integral of the first kind in the
parameter convention, and \(p_1(1)=0\) by continuity. The last equality follows
from the standard Landen transformation. We use the integral representation
throughout. The matrices with entries \(x^{|a-b|}\) form the classical
Kac--Murdock--Szeg{\H o} family \cite{kac1953,gray2006}.

\begin{lemma}[One change Toeplitz benchmark]
For

\[
Q_N=(r^{|a-b|})_{a,b=1}^N,
\qquad 0\le r<1,
\]

one has

\begin{align}\label{eq:one-change-limit}
P_{\rm opt}(Q_N),P_{\rm SRM}(Q_N)
&\longrightarrow p_1(r), \notag\\
\frac1N\operatorname{tr}\sqrt{Q_N}
&\longrightarrow\sqrt{p_1(r)}.
\end{align}
\end{lemma}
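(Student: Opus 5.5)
The plan is to repeat the argument of Theorem~\ref{thm:fixed-length}: compare \(Q_N\) with a circulant matrix having the same symbol, where the SRM is exactly optimal, and then pass the result back to \(Q_N\). The symbol of \(Q_N\) is the Poisson kernel \(g_r(\theta)=(1-r^2)/(1-2r\cos\theta+r^2)\). For \(0\le r<1\) it satisfies
\[
\frac{1-r}{1+r}\le g_r(\theta)\le\frac{1+r}{1-r}.
\]
By the Grenander--Szeg\H{o} inclusion, \(Q_N\) is therefore uniformly positive definite, with \(\lambda_{\min}(Q_N)\ge(1-r)/(1+r)\). The case \(r=0\) is trivial, since then \(Q_N=I_N\) and every value equals one.

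\emph{Circulant comparison.} Let \(C_N\) be the circulant matrix with first row entries \(r^{\min(k,N-k)}\). One could instead use the periodized kernel \(\sum_{m}r^{|k+mN|}\), whose eigenvalues are exactly the samples \(g_r(2\pi j/N)\); this choice keeps \(C_N\) positive with the same lower bound and has unit diagonal up to \(O(r^N)\). The difference \(Q_N-C_N\) is supported in the two wraparound corners, with entries of size \(r^{N-|a-b|}\). Its trace norm is bounded by the sum of absolute entries, which is \(\le 2\sum_{d\ge1}d\,r^d=O_r(1)\). The Sylvester integral representation, used as in \eqref{eq:sqrt-trace-perturbation} with both lower spectral bounds \(\ge(1-r)/(1+r)\), then gives \(\|\sqrt{Q_N}-\sqrt{C_N}\|_1=O_r(1)\). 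Normalizing the circulant to unit diagonal, if the periodized version is used, changes things only by \(O(r^N)\) in operator norm.

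\emph{Limits.} For the circulant, \(\sqrt{C_N}\) has constant diagonal \(\beta_N=N^{-1}\operatorname{tr}\sqrt{C_N}\). This is a Riemann sum of \(\sqrt{g_r}\), so it converges to \(\frac1{2\pi}\int_0^{2\pi}\sqrt{g_r}=\sqrt{p_1(r)}\). The Yuen--Kennedy--Lax certificate \(Y=(\beta_N/N)\sqrt{C_N}\) from the proof of Theorem~\ref{thm:fixed-length} shows that \(P_{\rm opt}(C_N)=P_{\rm SRM}(C_N)=\beta_N^2\). The transfer estimate \eqref{eq:opt-circulant-transfer} then yields \(|P_{\rm opt}(Q_N)-\beta_N^2|=O(N^{-1/2})\), and the SRM diagonal bound yields \(|P_{\rm SRM}(Q_N)-\beta_N^2|=O(N^{-1})\). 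The trace statement follows from \(|\operatorname{tr}\sqrt{Q_N}-\operatorname{tr}\sqrt{C_N}|\le\|\sqrt{Q_N}-\sqrt{C_N}\|_1=O(1)\). As a consistency check, \eqref{eq:trace-srm-bound} sandwiches both probabilities between \((N^{-1}\operatorname{tr}\sqrt{Q_N})^2\) and \(P_{\rm opt}(Q_N)\).

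\emph{Main obstacle.} The main difficulty is the trace-norm control of the corner perturbation. Unlike the banded case in Theorem~\ref{thm:fixed-length}, the perturbation here has full rank with geometrically decaying entries. I would handle it by the entrywise \(\ell^1\) bound above: each matrix unit has trace norm one, so the geometric decay gives a bound uniform in \(N\). A secondary point is the elliptic closed form in \eqref{eq:p1-definition}, which may simply be quoted via the Landen transformation as stated there.
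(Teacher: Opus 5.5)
Your argument is correct, but it takes a different route from the paper's proof of this lemma.

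\textbf{The paper's route.} The paper builds no circulant for \(Q_N\). It views \(Q_N\) as a compression of the convolution operator with symbol \(g_r\). It then applies the one-dimensional local diagonal F{\o}lner limit from the Supplemental Material, which gives convergence in mean of the diagonal of \(\sqrt{Q_N}\) to \((2\pi)^{-1}\int\sqrt{g_r}\). The upper bound on \(P_{\rm opt}\) comes from the Gram dual bound, and \eqref{eq:trace-srm-bound} closes the squeeze.

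\textbf{Your route.} You transplant the circulant comparison and Yuen--Kennedy--Lax certificate from the proof of Theorem~\ref{thm:fixed-length}. You correctly identify the one new ingredient: the corner perturbation no longer has bounded rank. You replace the rank bookkeeping with the entrywise bound \(\|Q_N-C_N\|_1\le 2\sum_{d\ge1}d\,r^d\). Together with the Sylvester integral and the uniform spectral floor, this keeps \(\|\sqrt{Q_N}-\sqrt{C_N}\|_1=O_r(1)\).

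\textbf{What each buys.} Your route is self-contained within the main text and gives explicit rates. You get \(P_{\rm opt}(Q_N)-P_{\rm SRM}(Q_N)=O_r(N^{-1/2})\) and \(N^{-1}\operatorname{tr}\sqrt{Q_N}=\sqrt{p_1(r)}+O_r(N^{-1})\), since the equispaced Riemann sum of the analytic periodic function \(\sqrt{g_r}\) converges exponentially fast. The paper's route gives no rate here, but it buys uniformity. The same F{\o}lner transfer is needed for the triangular endpoint domain in Theorem~\ref{thm:unknown-length}. That index set is not a discrete torus, so no circulant model with constant square-root diagonal is directly available, and the lemma becomes the one-dimensional instance of that machinery.

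\textbf{A bookkeeping point.} Exact corner support and unit diagonal hold for the truncated circulant with first row \(r^{\min(k,N-k)}\). Exact sampling \(g_r(2\pi j/N)\) and the floor \((1-r)/(1+r)\) hold for the periodized one. Whichever you choose, the remaining properties hold only up to exponentially small errors. For the periodized choice, \(Q_N-C_N\) is also not corner supported, though its entrywise \(\ell^1\) norm is still \(O_r(1)\). None of this affects your estimates.
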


\begin{proof}
The matrix \(Q_N\) is the compression of the one-dimensional convolution
operator with kernel \(r^{|d|}\), whose symbol is \(g_r\) and is bounded above
and below away from zero. The local diagonal Følner limit, with \(d=1\) and
\(\phi(t)=\sqrt t\), gives convergence in mean of the square root diagonal to
\((2\pi)^{-1}\int\sqrt{g_r}=\sqrt{p_1(r)}\). The Gram dual bound
supplies the matching minimum error upper bound, while
\eqref{eq:trace-srm-bound} supplies the lower bound and the SRM squeeze. This
recovers the one-dimensional change-point Toeplitz functional of
Ref.~\cite{sentis2016}, evaluated here at the effective compound overlap
\(r=c^2\).
\end{proof}

\begin{theorem}[Growing known interval length]\label{thm:long-known}
Let \(i=i_n\) and \(N_n=n-i_n+1\). If

\begin{equation}\label{eq:long-known-conditions}
i_n\to\infty,\qquad N_n\to\infty,
\end{equation}

then, for every fixed \(0\le r\le1\),

\begin{equation}\label{eq:long-known-limit}
P_{\rm opt}(G_{N_n,i_n}),
P_{\rm SRM}(G_{N_n,i_n})
\longrightarrow p_1(r)=p_1(c^2).
\end{equation}

No condition on \(i_n/N_n\) is required.
The convergence is pointwise in fixed \(r\).
\end{theorem}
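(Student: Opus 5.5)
The plan is to compare \(G_{N,i}\) with the one-change Kac--Murdock--Szeg\H{o} matrix \(Q_N=(r^{|a-b|})_{a,b=1}^N\). By the One change Toeplitz benchmark lemma, \(Q_N\) already has \(P_{\rm opt}(Q_N),P_{\rm SRM}(Q_N)\to p_1(r)\) and \(N^{-1}\operatorname{tr}\sqrt{Q_N}\to\sqrt{p_1(r)}\). The endpoint cases are settled first, as in Section~\ref{endpoint-parameters}. If \(r=0\), then \(G_{N,i}=I\) and both probabilities equal \(1=p_1(0)\). If \(r=1\), then \(G_{N,i}=J_N\) and both probabilities equal \(1/N_n\to0=p_1(1)\). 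We therefore fix \(0<r<1\).

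\emph{Decomposition.} As in the proof of Theorem~\ref{thm:fixed-length}, write
\[
G_{N,i}=r^iJ_N+T_N(f_{i,r}) .
\]
The symbol \(f_{i,r}\) has Fourier coefficients \(r^{|d|}-r^i\) for \(|d|<i\) and \(0\) otherwise. Comparing it with the Poisson symbol \(g_r\), whose coefficients are \(r^{|d|}\), gives
\[
\|f_{i,r}-g_r\|_\infty\le(2i-1)r^i+2\sum_{d\ge i}r^d\longrightarrow0
\]
as \(i=i_n\to\infty\). Since \(\|T_N(h)\|\le\|h\|_\infty\) uniformly in \(N\), this yields
\[
G_{N,i}=Q_N+E_n+P_n,
\]
with \(\|E_n\|_{\rm op}=:\varepsilon_n\to0\) uniformly in \(N\), and with the positive rank-one term \(P_n=r^{i_n}J_{N_n}\). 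No relation between \(i_n\) and \(N_n\) enters this step.

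\emph{Square-root perturbation.} I would use the operator-monotone (Ando / Birman--Koplienko--Solomyak) inequality
\[
\|\sqrt X-\sqrt Y\|\le\|\sqrt{|X-Y|}\|
\]
in Hilbert--Schmidt norm for the rank-one step. This gives
\[
\|\sqrt{A+P_n}-\sqrt A\|_{\rm HS}\le\|\sqrt{P_n}\|_{\rm HS}=\sqrt{\operatorname{tr}P_n}=\sqrt{r^{i_n}N_n}.
\]
For the small step I would use the spectral gap
\[
Q_N\ge\frac{1-r}{1+r}\,I=:\delta I,
\]
which holds because \(g_r\ge\delta\). For \(n\) large this gives \(Q_N+E_n\ge\tfrac{\delta}{2}I\), and the Sylvester-integral bound then gives
\[
\|\sqrt{Q_N+E_n}-\sqrt{Q_N}\|_{\rm op}\le\frac{\varepsilon_n}{\sqrt{2\delta}} .
\]
Hence \(\Delta_n=\sqrt{G_{N,i}}-\sqrt{Q_N}\) satisfies
\[
\frac{\|\Delta_n\|_{\rm HS}}{\sqrt{N_n}}\le\frac{\varepsilon_n}{\sqrt{2\delta}}+r^{i_n/2}\longrightarrow0 .
\]

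\emph{Transfer.} The Cauchy--Schwarz transfer estimate \eqref{eq:opt-circulant-transfer} applies verbatim with \(Q_N\) in place of \(C_{N,i}\). It gives
\[
|P_{\rm opt}(G_{N_n,i_n})-P_{\rm opt}(Q_{N_n})|\le\frac{\|\Delta_n\|_{\rm HS}}{\sqrt{N_n}}\to0 .
\]
The same bound controls the trace functional:
\[
N_n^{-1}|\operatorname{tr}\Delta_n|\le N_n^{-1/2}\|\Delta_n\|_{\rm HS}\to0 .
\]
Therefore \(N_n^{-1}\operatorname{tr}\sqrt{G_{N_n,i_n}}\to\sqrt{p_1(r)}\), and \eqref{eq:trace-srm-bound} squeezes \(P_{\rm SRM}\) between \((N^{-1}\operatorname{tr}\sqrt G)^2\to p_1(r)\) and \(P_{\rm opt}\to p_1(r)\). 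This proves the theorem.

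The main obstacle is the rank-one piece \(r^{i_n}J\). It is not small in operator norm when \(r^{i_n}N_n\) is large, for example when \(i_n\) grows much more slowly than \(N_n\). The argument must therefore rely on a norm, Hilbert--Schmidt normalized by \(\sqrt N\), in which its square-root contribution is \(r^{i_n/2}\) regardless of \(N_n\). This is exactly what removes any condition on \(i_n/N_n\). If the Ando-type inequality is not wanted, my fallback is the explicit formula: \(\sqrt{A+P}-\sqrt A\) is positive with trace at most \(\sqrt{\operatorname{tr}P}\) when \(P\) has rank one.
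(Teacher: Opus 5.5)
Your proof is correct, but your route differs from the paper's. The paper first splits off the common vacuum component at the level of states, \(|\Phi_a^{(i)}\rangle=\sqrt{q_i}\,|\Omega\rangle+\sqrt{1-q_i}\,|\xi_a^{(i)}\rangle\), and works with the normalized residual Gram matrix \(R_{N,i}\). It then uses two separate mechanisms:
\begin{enumerate}
\item For \(P_{\rm opt}\), it uses the supplemental operator norm Gram stability estimate for \(R_{N,i}\) versus \(Q_N\) under a common spectral gap. It adds the per-state trace distance \(r^{i/2}\) between each original state and its residual.
\item For the SRM, it uses the Rotfel'd sandwich \(\sqrt{1-q_i}\operatorname{tr}\sqrt{R_{N,i}}\le\operatorname{tr}\sqrt{G_{N,i}}\le\sqrt{1-q_i}\operatorname{tr}\sqrt{R_{N,i}}+\sqrt{q_iN}\).
\end{enumerate}

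You stay at the Gram level with the unnormalized residual \(T_N(f_{i,r})\) and absorb the \((1-r^i)\) rescaling into \(E_n\). Everything then reduces to the single quantity \(N^{-1/2}\|\sqrt{G_{N,i}}-\sqrt{Q_N}\|_{\rm HS}\). This one quantity controls \(P_{\rm opt}\) through the canonical-realization Cauchy--Schwarz estimate from Theorem~\ref{thm:fixed-length}. It also controls the normalized trace by a second Cauchy--Schwarz step.

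Your route buys economy and self-containedness. One estimate, uniform in \(N\), handles both criteria, and you avoid the supplemental stability lemma and the normalized residual states.

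The paper's route buys two things:
\begin{enumerate}
\item Its trace-level sandwich bounds the rank-one contribution to \(N^{-1}\operatorname{tr}\sqrt{G_{N,i}}\) by \(\sqrt{q_i/N}\). Passing through the Hilbert--Schmidt norm leaves you with \(r^{i_n/2}\) instead, which is immaterial for the limit.
\item Its vacuum/residual decomposition is reused later, when \(|\Omega\rangle\) is reserved for \(H_0\) in Theorem~\ref{thm:h0-localization}.
\end{enumerate}

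Your fallback, that \(\sqrt{A+P_n}-\sqrt{A}\ge0\) with trace at most \(\sqrt{\operatorname{tr}P_n}\), is exactly the paper's Rotfel'd step. Because the trace norm dominates the Hilbert--Schmidt norm for this positive difference, it gives the same \(\sqrt{r^{i_n}N_n}\) bound as the Ando inequality, so either justification suffices.
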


\begin{proof}
The endpoint values \(r=0,1\) follow from the matrices described in
Section~\ref{endpoint-parameters}. We therefore assume \(0<r<1\) and put
\(q_i=r^i\). Separating the common all zero component gives

\[
|\Phi_a^{(i)}\rangle
=\sqrt{q_i}|\Omega\rangle
+\sqrt{1-q_i}|\xi_a^{(i)}\rangle,
\qquad \langle\Omega|\xi_a^{(i)}\rangle=0.
\]

The normalized residual Gram matrix is

\begin{equation}\label{eq:residual-gram}
(R_{N,i})_{ab}=
\begin{cases}
\dfrac{r^{|a-b|}-r^i}{1-r^i},&|a-b|<i,\\
0,&|a-b|\ge i.
\end{cases}
\end{equation}

Its symbol

\[
h_{i,r}(\theta)
=1+2\sum_{d=1}^{i-1}\frac{r^d-r^i}{1-r^i}\cos(d\theta)
\]

satisfies

\begin{equation}\label{eq:residual-symbol-convergence}
\|h_{i,r}-g_r\|_\infty
\le\frac{2(i-1)r^i}{1-r^i}+\frac{2r^i}{1-r}
=:\varepsilon_i(r)\longrightarrow0.
\end{equation}

Let \(Q_N=(r^{|a-b|})_{a,b=1}^N\). Toeplitz compression gives

\begin{equation}\label{eq:residual-operator-convergence}
\|R_{N,i}-Q_N\|_{\rm op}\le\varepsilon_i(r).
\end{equation}

Since \(g_r\ge(1-r)/(1+r)>0\), both matrices have a common positive
spectral gap for large \(i\). The operator norm Gram stability estimate therefore
transfers the optimal minimum error success probability uniformly in \(N\).
The trace distance
between each original state and its residual state is also
\(\sqrt{q_i}=r^{i/2}\), so

\[
|P_{\rm opt}(G_{N,i})-P_{\rm opt}(R_{N,i})|
\le r^{i/2}.
\]

The single-change-point Toeplitz limit for \(Q_N\) therefore proves

\[
P_{\rm opt}(G_{N,i})\to p_1(r).
\]

For the SRM, the rank-one common component requires a separate trace estimate.
Since

\[
G_{N,i}=q_iJ_N+(1-q_i)R_{N,i},
\]

operator monotonicity and the Rotfel'd trace inequality give the exact sandwich

\begin{align}\label{eq:rank-one-sandwich}
\sqrt{1-q_i}\,\operatorname{tr}\sqrt{R_{N,i}}
&\le\operatorname{tr}\sqrt{G_{N,i}} \notag\\
&\le\sqrt{1-q_i}\,\operatorname{tr}\sqrt{R_{N,i}}
+\sqrt{q_iN}.
\end{align}

Equation~\eqref{eq:residual-operator-convergence}, the common spectral gap, and the Toeplitz limit imply

\[
\frac1N\operatorname{tr}\sqrt{R_{N,i}}
\longrightarrow\sqrt{p_1(r)}.
\]

After dividing \eqref{eq:rank-one-sandwich} by \(N\), the remaining rank-one term is \(\sqrt{q_i/N}\to0\). Thus

\[
\frac1N\operatorname{tr}\sqrt{G_{N,i}}
\longrightarrow\sqrt{p_1(r)}.
\]

The sandwich \eqref{eq:trace-srm-bound} completes the SRM proof.
\end{proof}
Figure~\ref{fig:analytic-limits}(a) displays the family of
fixed length limits $P_\infty(i,c^2)$; as $i$ grows these limits
approach the long-interval law $p_1(c^2)$ derived in
Theorem~\ref{thm:long-known}.
\begin{corollary}[An exact finite size regime]
If \(i\ge N-1\), then \(\min(|a-b|,i)=|a-b|\) for every matrix entry, and hence

\[
G_{N,i}=Q_N
\]

exactly. The approximation in Theorem~\ref{thm:long-known} is needed only when \(i<N-1\).
\end{corollary}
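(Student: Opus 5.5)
The claim is a direct entrywise identity, so I will verify it from the exact Gram formula \eqref{eq:fixed-gram}, \((G_{N,i})_{ab}=r^{\min(|a-b|,i)}\), with indices \(a,b\in\{1,\dots,N\}\).

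For any two admissible starting points, the displacement satisfies \(|a-b|\le N-1\). Under the hypothesis \(i\ge N-1\), this gives \(|a-b|\le N-1\le i\), so \(\min(|a-b|,i)=|a-b|\). The only case to watch is equality, \(|a-b|=i=N-1\). There both arguments of the minimum coincide, so the entry is \(r^{i}=r^{|a-b|}\) either way. Hence every entry of \(G_{N,i}\) equals \(r^{|a-b|}\), which is precisely the Kac--Murdock--Szeg\H{o} matrix \(Q_N\) of the one change Toeplitz benchmark. I would also confirm the result directly from the states. When \(i\ge N-1\), two translated intervals differ on exactly \(2|a-b|\) sites, half carrying \(\lvert\psi\rangle\) against \(\lvert0\rangle\) and half the reverse. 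The overlap is therefore \(c^{2|a-b|}=r^{|a-b|}\). This matches the formula, since the intervals still intersect or are adjacent.

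For the second sentence of the statement, the converse is the key point. If \(i<N-1\), the pair \(a=1,b=N\) has \(|a-b|=N-1>i\), so that entry is \(r^i\ne r^{N-1}\) whenever \(0<r<1\). Thus \(G_{N,i}\ne Q_N\) in that case, and the residual symbol approximation \eqref{eq:residual-symbol-convergence} is genuinely needed only there. In the exact regime, the limit \eqref{eq:long-known-limit} follows immediately from the one change benchmark \eqref{eq:one-change-limit}, without the rank-one sandwich. No step presents a real obstacle; the only care needed is the boundary case \(|a-b|=i\).
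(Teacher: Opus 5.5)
Your proof is correct and is the same argument as the paper's: the bound \(|a-b|\le N-1\le i\) substituted into the Gram formula \eqref{eq:fixed-gram}. The state-level check and the converse for \(i<N-1\) (correctly restricted to \(0<r<1\)) are valid extras that the statement does not require.
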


\begin{figure*}[t]
  \centering
  \includegraphics[width=0.96\textwidth]{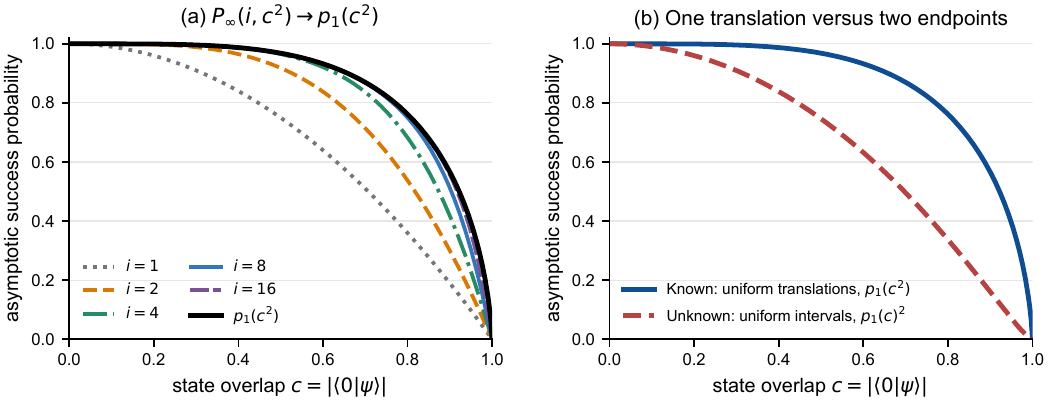}
  \caption{\textbf{Hierarchy of analytic localization limits.}
  (a) The fixed length Toeplitz integral \(P_\infty(i,c^2)\) approaches the
  law \(p_1(c^2)\) for a growing known interval length. The \(i=2\) curve is
  equivalently given by the complete elliptic integral expression in
  Corollary~\ref{cor:short-intervals}. (b) The one translation law \(p_1(c^2)\) and
  two-endpoint law \(p_1(c)^2\) are shown for their respective uniform
  translation and uniform interval priors. The latter follows from the
  two-dimensional Følner limit and separable comparison symbol.}
  \label{fig:analytic-limits}
\end{figure*}

\section{Unknown interval length}\label{unknown-interval-length}

Let

\begin{align*}
\mathcal I_n&=\{[a,b]:1\le a\le b\le n\}, \\
M_n&=|\mathcal I_n|=\frac{n(n+1)}2,
\end{align*}

and equip \(\mathcal I_n\) with the uniform prior \(1/M_n\). This is uniform
over intervals, not over lengths. If \(L=|I|\), the induced marginal is

\begin{equation}\label{eq:induced-length-prior}
\begin{aligned}
\Pr\{L=\ell\}
&=\frac{n-\ell+1}{M_n}\\
&=\frac{2(n-\ell+1)}{n(n+1)},
\qquad 1\leq\ell\leq n.
\end{aligned}
\end{equation}

Conditionally on \(L=\ell\), the starting point is uniform over its
\(n-\ell+1\) admissible values. Consequently
\(\mathbb E L=(n+2)/3\), and \(L/n\) converges to the triangular density
\(2(1-x)\) on \(0<x<1\). Thus the prior favors shorter lengths, but its
typical interval still has order-\(n\) length. The candidate indexed by
\(I=[a,b]\) has \(|\psi\rangle\) on \(I\) and \(|0\rangle\) elsewhere. Its
exact Gram kernel is

\begin{equation}\label{eq:interval-gram}
G^{(n)}_{I,J}=c^{|I\triangle J|}.
\end{equation}

Theorem~\ref{thm:unknown-length} is therefore a Bayes law under the uniform
interval prior in \eqref{eq:induced-length-prior}. Other priors may retain
nonvanishing mass on bounded or logarithmically short intervals and can lead
to different limiting reductions.

For intersecting intervals,

\begin{equation}\label{eq:intersecting-distance}
|I\triangle J|=|a-a'|+|b-b'|.
\end{equation}

For disjoint intervals, assume without loss of generality that \(b<a'\) and define the separating gap by

\[
g(I,J):=a'-b-1\ge0.
\]

Then the exact identity is

\begin{equation}\label{eq:disjoint-distance}
|a-a'|+|b-b'|=|I\triangle J|+2g(I,J).
\end{equation}

Thus \eqref{eq:intersecting-distance} still holds for adjacent intervals,
where \(g(I,J)=0\), but differs by twice the separating gap when
\(g(I,J)>0\). The exact physical kernel is therefore not globally a
two-dimensional Toeplitz kernel.

Classical multilevel Szeg{\H o} theorems describe spectral averages for genuine
multilevel Toeplitz compressions \cite{tilli1998,tyrtyshnikov1998}. Here the
triangular endpoint sets and the non-Toeplitz physical kernel require a local
diagonal Følner statement and a separate perturbative transfer back to
\(G^{(n)}\).

\paragraph{Transfer from the comparison kernel.}
The retained family Gram transfer and its Følner specialization are stated and
proved in the Supplemental Material. Applied to a retained family with
asymptotically full prior mass whose Gram matrix converges in operator norm to
a positive convolution compression, these results transfer the square root
diagonal limit of the comparison matrix to both the SRM and the Bayes optimum.

\subsection{The two-endpoint law}\label{the-two-endpoint-law}

\begin{theorem}[Unknown interval length]\label{thm:unknown-length}
For every fixed \(0\le c\le1\), under the uniform prior on \(\mathcal I_n\),

\begin{equation}\label{eq:unknown-length-limit}
P_{\rm opt}(G^{(n)}),P_{\rm SRM}(G^{(n)})
\longrightarrow p_1(c)^2
\end{equation}

This fixed overlap statement is pointwise in \(c\); moving overlaps are treated
in Section~\ref{a-nonuniform-boundary-at-unit-overlap}.
\end{theorem}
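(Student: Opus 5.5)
The endpoint values are handled directly. For \(c=0\) the Gram matrix is the identity and both probabilities equal one, which is \(p_1(0)^2=1\). For \(c=1\) it is \(J_{M_n}\), so both probabilities equal \(1/M_n\to0=p_1(1)^2\). We therefore fix \(0<c<1\).

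\emph{Comparison kernel.} On the endpoint lattice, define the separable two-dimensional Toeplitz kernel
\[
K_{I,J}=c^{|a-a'|}c^{|b-b'|}.
\]
Its symbol is \(g_c(\theta_1)g_c(\theta_2)\), which lies between \(\bigl((1-c)/(1+c)\bigr)^2\) and \(\bigl((1+c)/(1-c)\bigr)^2\). The infinite convolution operator is therefore bounded and uniformly positive, so every compression to a finite index set has a common spectral gap. By \eqref{eq:intersecting-distance}, \(G^{(n)}_{I,J}=K_{I,J}\) whenever \(I,J\) intersect or are adjacent. By \eqref{eq:disjoint-distance}, for disjoint intervals with gap \(g\ge1\) we have \(G^{(n)}_{I,J}=c^{-2g}K_{I,J}\).

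\emph{Retained family.} Fix a cutoff \(L_n\to\infty\) with \(L_n=o(n)\), for instance \(L_n=\lfloor\log^2 n\rfloor\), and retain the intervals of length at least \(L_n\). Under the uniform prior \eqref{eq:induced-length-prior}, the discarded mass is \(O(L_n/n)\to0\). For two retained intervals that are disjoint with positive gap, \(|a-a'|+|b-b'|\ge 2L_n\), while \(|I\triangle J|\ge 2L_n\) as well. Hence both \(G^{(n)}_{I,J}\) and \(K_{I,J}\) are at most \(c^{|a-a'|+|b-b'|}\)-type quantities with total exponent at least \(2L_n\).

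For each row, I would bound the entries by a Schur test. The row sum of \(|G^{(n)}-K|\) over disjoint partners is bounded by a sum over displacement pairs \((u,v)\) with \(u+v\ge 2L_n\) of \(c^{u+v}\) times a bounded multiplicity. This is \(O(L_n c^{2L_n})\to0\), since \(|I\triangle J|=|a-a'|+|b-b'|-2g\) still grows like the displacement when both lengths exceed \(L_n\) (I would verify that \(|I\triangle J|\ge\tfrac12(|a-a'|+|b-b'|)\) there). Consequently the retained Gram matrix converges in operator norm to the compression of \(K\) onto the retained triangular set.

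\emph{Følner limit and transfer.} The retained index set
\[
\{(a,b):\ b-a+1\ge L_n,\ 1\le a\le b\le n\}
\]
is a triangle whose boundary has \(O(n)\) points against \(\Theta(n^2)\) points in total, so it is a Følner sequence in \(\mathbb Z^2\). The local diagonal Følner limit in dimension two with \(\phi(t)=\sqrt t\) gives the mean square root diagonal of the compression as
\[
\frac{1}{4\pi^2}\iint\sqrt{g_c(\theta_1)g_c(\theta_2)}\,d\theta_1\,d\theta_2=\sqrt{p_1(c)}\cdot\sqrt{p_1(c)}=p_1(c).
\]
The retained family Gram transfer then applies, using the operator-norm convergence, the common spectral gap and the asymptotically full prior mass. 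It delivers the upper bound \(P_{\rm opt}(G^{(n)})\le p_1(c)^2+o(1)\) through the Gram dual bound, and the lower bound \(P_{\rm SRM}(G^{(n)})\ge p_1(c)^2-o(1)\) through the square root diagonal of the full matrix, with the discarded rows controlled by their vanishing mass. Combining these with \eqref{eq:trace-srm-bound} squeezes both \(P_{\rm SRM}\) and \(P_{\rm opt}\) to \(p_1(c)^2\).

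\emph{Main obstacle.} The hardest step is the operator-norm estimate on the disjoint block. Gap-induced entries can be larger than the kernel entries, because \(c^{-2g}>1\), and one must show that the exponent \(|I\triangle J|\) still dominates the displacement uniformly on the retained set. I would first try the inequality \(|I\triangle J|\ge\max(|a-a'|,|b-b'|)\) for disjoint intervals; it gives summable \(\ell^1\) row tails beyond radius \(L_n\).
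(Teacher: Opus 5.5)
Your overall architecture matches the paper's proof: the same separable comparison kernel \(c^{|a-a'|+|b-b'|}\), a retained family of intervals of length at least \(L_n\), the F{\o}lner property of the truncated triangle, the factorization of \(\sqrt{g_c(\theta_1)g_c(\theta_2)}\), and the retained-family transfer. The problem is the step you single out as the main obstacle. Both inequalities you propose to use are false on the retained set. For \(I=[1,L_n]\) and \(J=[n-L_n+1,n]\) one has \(|I\triangle J|=2L_n\) while \(|a-a'|=|b-b'|=n-L_n\), so neither \(|I\triangle J|\ge\tfrac12(|a-a'|+|b-b'|)\) nor \(|I\triangle J|\ge\max(|a-a'|,|b-b'|)\) holds. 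For disjoint intervals \(|I\triangle J|=|I|+|J|\), which carries no information about the gap. The ``bounded multiplicity'' claim fails for the same reason. For fixed \(I\) and each length \(\ell\ge L_n\) there are of order \(n\) disjoint partners \(J\), each with entry \((G^{(n)}-K)_{I,J}=c^{|I|+\ell}(1-c^{2g})\ge(1-c^2)c^{|I|+\ell}\). The row sum over the disjoint block can therefore be of order \(n\,c^{2L_n}\), not \(O(L_nc^{2L_n})\). There is no displacement decay to exploit, so the Schur test as you set it up does not give your stated bound.

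The repair is simpler than what you planned, and it is what the paper does. You already observed that every disjoint-block entry of \(G^{(n)}-K\) is at most \(c^{|I|+|J|}\le c^{2L_n}\). The retained set has only \(|\mathcal B_n|\sim n^2/2\) elements, so \(\|G_{\mathcal B_n}-K_{\mathcal B_n}\|_{\rm op}\le|\mathcal B_n|\,c^{2L_n}\). Alternatively, the Schur test with the correct row sum gives \(\le n\,c^{2L_n}/(1-c)\). The paper takes \(L_n=\lceil\gamma\log n\rceil\) with \(\gamma>|\log c|^{-1}\) precisely so that \(c^{2L_n}\) beats this polynomial count. Your choice \(L_n=\lfloor\log^2 n\rfloor\) makes \(c^{2L_n}\) superpolynomially small, so the crude bound also works, and the discarded mass \(O(\log^2 n/n)\) still vanishes. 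With this replacement, the rest of your argument goes through as in the paper.
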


\begin{proof}
\emph{Retained family.}
For \(c=0\), the hypotheses are orthogonal, whereas for \(c=1\) they are
identical; both cases agree with the stated limit. Assume henceforth that
\(0<c<1\), and choose

\[
L_n=\lceil\gamma\log n\rceil,
\qquad \gamma>|\log c|^{-1},
\]

and retain only

\[
\mathcal B_n=\{I\in\mathcal I_n:|I|\ge L_n\}.
\]

The deleted prior mass is

\begin{equation}\label{eq:short-interval-mass}
\varepsilon_n
=1-\frac{|\mathcal B_n|}{M_n}
=O\!\left(\frac{\log n}{n}\right)\to0.
\end{equation}

On \(\mathcal B_n\), define the comparison kernel

\begin{equation}\label{eq:comparison-kernel}
H^{(n)}_{I,J}=c^{|a-a'|+|b-b'|}.
\end{equation}

It is the compression of the positive convolution kernel \(c^{|u|+|v|}\). The two kernels agree for intersecting intervals. For disjoint intervals,

\[
|G_{I,J}-H_{I,J}|
\le c^{|I|+|J|}\le c^{2L_n}.
\]

Consequently,

\begin{equation}\label{eq:comparison-operator}
\|G_{\mathcal B_n}-H_{\mathcal B_n}\|_{\rm op}
\le|\mathcal B_n|c^{2L_n}=o(1).
\end{equation}

\emph{Følner geometry.}
The retained endpoint sets form a Følner sequence. Under the endpoint
identification \(I=[a,b]\leftrightarrow(a,b)\),

\[
\mathcal B_n
=\{(a,b)\in\mathbb Z^2:a\ge1,\ b\le n,\ b-a\ge L_n-1\},
\]

and

\begin{equation}\label{eq:long-domain-cardinality}
|\mathcal B_n|
=\frac{(n-L_n+1)(n-L_n+2)}2
\sim\frac{n^2}{2}.
\end{equation}

Fix \(h=(u,v)\in\mathbb Z^2\) and put \(H=\|h\|_\infty\). If the indicators
of \(\mathcal B_n\) at \(x=(a,b)\) and \(x+h\) differ, at least one of the
three affine inequalities defining \(\mathcal B_n\) changes truth value.
Such an \(x\) lies in a strip of respectively \(|u|\), \(|v|\), or
\(|v-u|\) lattice layers adjacent to
\(a=1\), \(b=n\), or \(b-a=L_n-1\). Because either \(x\) or \(x+h\) belongs
to \(\mathcal B_n\), each layer contains at most \(n+2H+1\) relevant points.
Put \(C_h=|u|+|v|+|v-u|\). Therefore

\begin{equation}\label{eq:long-domain-folner-bound}
\frac{|(\mathcal B_n-h)\triangle\mathcal B_n|}{|\mathcal B_n|}
\le
\frac{C_h(n+2H+1)}{|\mathcal B_n|}
\longrightarrow0,
\end{equation}

where \eqref{eq:long-domain-cardinality} and \(L_n=O(\log n)=o(n)\) were
used. Thus \((\mathcal B_n)\) is a Følner sequence in \(\mathbb Z^2\).
The comparison convolution symbol is

\[
F_c(\theta_1,\theta_2)=g_c(\theta_1)g_c(\theta_2),
\]

with a uniform positive lower bound. The square root spectral average in
the Følner comparison transfer factorizes as

\begin{align}\label{eq:endpoint-factorization}
\alpha_c
&=\frac1{(2\pi)^2}\int\!\!\int\sqrt{F_c} \notag\\
&=\left[\frac1{2\pi}\int_0^{2\pi}
\sqrt{g_c(\theta)}\,d\theta\right]^2
=p_1(c).
\end{align}

\emph{Transfer to the full ensemble.}
The kernel \(k(u,v)=c^{|u|+|v|}\) is Hermitian, belongs to
\(\ell^1(\mathbb Z^2)\), and has \(k(0,0)=1\). Equations
\eqref{eq:short-interval-mass}, \eqref{eq:comparison-operator}, and
\eqref{eq:long-domain-folner-bound} verify all remaining hypotheses of
the Følner comparison transfer. That result transfers the
comparison-kernel limit simultaneously to the optimum and the SRM of the full
physical Gram ensemble, giving
\(\alpha_c^2=p_1(c)^2\).
\end{proof}
Figure~\ref{fig:analytic-limits}(b) contrasts the two laws:
$p_1(c^2)$ for uniform translations of a known long interval and
$p_1(c)^2$ for the uniform prior over intervals of unknown length.

\subsection{The boundary layer at unit overlap}
\label{a-nonuniform-boundary-at-unit-overlap}

The pointwise fixed overlap limit in Theorem~\ref{thm:unknown-length} does not
resolve sequences with \(c_n\to1\). At \(c_n=1\), all interval labels
represent the same state. The trace benchmark, SRM success probability, and
Bayes optimum then all equal the exact label baseline \(1/M_n\). For a moving
overlap, write

\begin{equation}\label{eq:moving-physical-gram}
 G_n(c)=\bigl(c^{|I\triangle J|}\bigr)_{I,J\in\mathcal I_n},
 \qquad G_n:=G_n(c_n),
\end{equation}

We use the following boundary scale for the amplitude overlap:

\begin{equation}\label{eq:critical-parameter}
 L_n=\log n,
 \qquad
 \tau_n=n(1-c_n)L_n^2.
\end{equation}

The corresponding squared-overlap quantity is \((1+c_n)\tau_n\). Put

\begin{equation}\label{eq:critical-amplitude}
 A(t)=1+\frac{2\sqrt t}{\pi}+\frac{\sqrt2\,t}{\pi^2},
 \qquad F(t)=A(t)^2,
\end{equation}

and, for an \(M\)-label Gram matrix, define the trace benchmark

\begin{equation}\label{eq:critical-trace-benchmark}
 P_{\rm tr}(G)=\left(\frac{\operatorname{tr}\sqrt G}{M}\right)^2.
\end{equation}

For \(0\le t\le T\), set

\begin{equation}\label{eq:critical-overlap-family}
 c_n(t)=1-\frac{t}{nL_n^2},
 \qquad
 (G_{n,t})_{I,J}=c_n(t)^{|I\triangle J|}.
\end{equation}

For all sufficiently large \(n\), \(c_n(t)>0\) uniformly on every fixed
interval \([0,T]\). At \(t=0\), \(G_{n,0}=J_{M_n}\), and the three success
criteria are each exactly \(1/M_n\).

In this subsection, a maximum indexed by \(X\) ranges over
\(\{\mathrm{tr},\mathrm{SRM},\mathrm{opt}\}\).

\begin{theorem}[Ultracritical law for unknown interval length]
\label{thm:unknown-length-ultracritical}
For every \(T\in[0,\infty)\), there are \(C_T<\infty\) and \(n_T\) such that,
for \(n\ge n_T\),

\begin{equation}\label{eq:critical-uniform-rate}
 \sup_{0\le t\le T}
 \max_X
 \left|M_nP_X(G_{n,t})-F(t)\right|
 \le C_T\frac{\log\log n}{\log n}.
\end{equation}

Consequently, if \(\tau_n\to\tau\in[0,\infty)\) in
\eqref{eq:critical-parameter}, then

\begin{equation}\label{eq:critical-common-limit}
 M_nP_X(G_n)\longrightarrow F(\tau)
 \quad\text{for every }X.
\end{equation}

More precisely, for the same \(n\ge n_T\), if
\(\tau,\tau_n\in[0,T]\) and \(G_n=G_{n,\tau_n}\), then, after increasing
\(C_T\) if necessary,

\begin{equation}\label{eq:critical-centered-rate}
 \begin{aligned}
 &\max_X\left|M_nP_X(G_n)-F(\tau)\right|\\[-0.3ex]
 &\quad\le C_T\left[
 \frac{\log\log n}{\log n}
 +\sqrt{|\tau_n-\tau|}
 +|\tau_n-\tau|
 \right].
 \end{aligned}
\end{equation}
\end{theorem}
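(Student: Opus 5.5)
We work throughout with the excitation-number decomposition of the candidate vectors. Write \(|\psi\rangle=c|0\rangle+s|1\rangle\) with \(s=\sqrt{1-c^2}\). Expanding \(|\Phi_I\rangle\) over the set \(S\subseteq I\) of excited sites gives \(|\Phi_I\rangle=\sum_{k\ge0}|\Phi_I^{(k)}\rangle\), with components
\[
|\Phi_I^{(k)}\rangle=\sum_{S\subseteq I,\,|S|=k} c^{|I|-k}s^k|S\rangle .
\]
Accordingly \(G_{n,t}=\sum_k G^{(k)}\), where \(G^{(k)}_{I,J}=c^{|I|+|J|-2k}s^{2k}\binom{|I\cap J|}{k}\). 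On the window \(c=c_n(t)\) we have \(s^2\simeq 2t/(nL_n^2)\), and \(c^{|I|}=1+O(T/(nL_n^2))\) uniformly, because \(|I|\le n\).

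The plan is to prove \(\operatorname{tr}\sqrt{G_{n,t}}=\sqrt{M_n}\,A(t)\bigl(1+O_T(\log\log n/\log n)\bigr)\) uniformly in \(t\in[0,T]\), and then to close the squeeze \eqref{eq:trace-srm-bound} from above by a dual bound.

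\emph{Block traces.} We compute \(\operatorname{tr}\sqrt{G^{(k)}}\) for \(k=0,1,2\) through the dual Gram matrices on excitation space.
\begin{itemize}
\item For \(k=0\), \(G^{(0)}\) is rank one with trace \(\sum_I c^{2|I|}=M_n(1+O(T/L_n^2))\), so \(\operatorname{tr}\sqrt{G^{(0)}}=\sqrt{M_n}\,(1+o(1))\).
\item For \(k=1\), the nonzero spectrum of \(G^{(1)}\) equals that of the \(n\times n\) matrix \(s^2\sum_I c^{2|I|-2}\mathbf 1_I\mathbf 1_I^{\!\top}\). Up to a factor \(1+O(T/L_n^2)\), this matrix is \(s^2K\) with \(K(x,y)=\min(x,y)\,(n+1-\max(x,y))\). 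Now \(K\) is \(n+1\) times the Dirichlet Green matrix of the discrete Laplacian on \(\{1,\dots,n\}\), so its eigenvalues are \((n+1)/(4\sin^2(\pi j/(2(n+1))))\), which behave like \(n^3/(\pi^2 j^2)\) for \(j\ll n\). Summing square roots gives a harmonic sum, and hence
\[
\operatorname{tr}\sqrt{G^{(1)}}=\frac{s\,n^{3/2}}{\pi}\bigl(\log n+O(1)\bigr)=\sqrt{M_n}\,\frac{2\sqrt t}{\pi}\Bigl(1+O\bigl(\tfrac1{\log n}\bigr)\Bigr).
\]
\item For \(k=2\), the analogous dual operator acts on pairs \(x<y\) with kernel \(\#\{I\supseteq\{x,y,x',y'\}\}\). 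It factorizes through the hull endpoints into a product of two one-dimensional Green-type kernels. Its square-root trace carries a double harmonic sum of order \((\log n)^2\) against \(s^2\sim L_n^{-2}\), which should produce the coefficient \(\sqrt2\,t/\pi^2\).
\item For \(k\ge3\), the weight \(s^{2k}\) against at most \((\log n)^{k}\) growth makes the contribution \(O_T(L_n^{-1})\) relative to \(\sqrt{M_n}\), summably in \(k\). I would control this tail with the Rotfel'd subadditivity \(\operatorname{tr}\sqrt{A+B}\le\operatorname{tr}\sqrt A+\operatorname{tr}\sqrt B\).
\end{itemize}
In each block, the error \(\log\log n/\log n\) comes from truncating the harmonic sums at the scale where the continuum Green approximation and the \(c^{2|I|}\) weights stop being uniform.

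\emph{Recombination.} This is the main obstacle, because \(\sqrt{\sum_kG^{(k)}}\ne\sum_k\sqrt{G^{(k)}}\). Writing \(G=V^*V\) with \(V\) stacked over excitation sectors, the cross blocks \(V_kV_l^*\) do not vanish. The upper bound \(\operatorname{tr}\sqrt G\le\sum_k\operatorname{tr}\sqrt{G^{(k)}}\) follows from concavity (pinching of \(VV^*\)). For the lower bound I would use the hull decomposition: sector \(k\) lives mainly on eigenvectors whose interval profiles oscillate at frequencies \(j\), and these are nearly orthogonal to the near-constant top vector and to the dominant modes of the other sectors. Concretely, I would compress \(G\) onto the span of approximate eigenvectors built sector by sector, with the top mode, the \(j\)-modes of \(K\), and the product modes for \(k=2\). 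I would then show that the off-diagonal compression has Hilbert--Schmidt norm small enough that, by the Sylvester/Lipschitz bound for \(\sqrt{\cdot}\) used earlier in \eqref{eq:sqrt-trace-perturbation}, it changes the trace by \(O(\sqrt{M_n}\log\log n/\log n)\). This is the step I expect to consume most of the work, in particular the endpoint macroblocks near \(a=1\) and \(b=n\), where the weights degenerate.

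\emph{Optimum and uniformity.} Since \(P_{\rm tr}\le P_{\rm SRM}\le P_{\rm opt}\), it remains to prove \(M_nP_{\rm opt}\le F(t)+O_T(\log\log n/\log n)\). I would build a feasible Yuen--Kennedy--Lax operator \(Y=\beta\,\widetilde S\), where \(\widetilde S\) is the sector-wise approximate square root and \(\beta\) is the maximal diagonal of \(\widetilde S\). To show that \(\beta\le(1+O(\log\log n/\log n))\operatorname{tr}\widetilde S/M_n\), I would use near-constancy of the diagonal of each sector's square root away from an \(O(n^2/\log n)\)-size boundary set. That boundary set is then absorbed by the Gram dual bound from the Supplemental Material, applied to the retained family. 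All constants depend on \(t\) only through \(\sqrt t\) and \(t\) with \(t\le T\), which gives the uniform bound \eqref{eq:critical-uniform-rate}.

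Finally, \eqref{eq:critical-centered-rate} follows from \(|F(\tau_n)-F(\tau)|\le C_T(\sqrt{|\tau_n-\tau|}+|\tau_n-\tau|)\). This holds because \(A\) is Hölder-\(1/2\) at zero and Lipschitz elsewhere on \([0,T]\), and \eqref{eq:critical-common-limit} is then immediate.
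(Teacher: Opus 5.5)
Your architecture matches the paper's: excitation sectors with amplitudes \(1\), \(2\sqrt t/\pi\), \(\sqrt2\,t/\pi^2\), a square-root-trace lower bound, a sectorwise Bayes dual upper bound, and continuity of \(F\). Your one-excitation computation via the Dirichlet Green matrix is also correct. The genuine gap is the upper bound on \(P_{\rm opt}\).

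As written, \(\beta\widetilde S\) with an \emph{approximate} square root is not certified by \(\operatorname{diag}\widetilde S\). With \(S=\sqrt{G_{n,t}}\) in canonical coordinates, feasibility reads \(M_n\beta\ge e_I^*S\widetilde S^{-1}Se_I\). The exactly feasible sectorwise certificate is block diagonal in the physical space, \(Y=\beta\bigoplus_k(V_kV_k^*)^{1/2}\). It is feasible iff \(M_n\beta\ge\sum_k(\sqrt{G^{(k)}})_{II}\) for every \(I\), and it gives
\[
M_nP_{\rm opt}\le\max_I\Bigl(\sum_k\bigl(\sqrt{G^{(k)}}\bigr)_{II}\Bigr)\sum_k\operatorname{tr}\sqrt{G^{(k)}} .
\]
This reproduces \(F(t)\) only if the sector diagonals are uniformly close to their means over \emph{all} labels, and they are not. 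Up to a factor \(1+O(T/L_n^2)\), \((\sqrt{G^{(1)}})_{II}=s\,\mathbf 1_I^{\top}K^{-1/2}\mathbf 1_I\), a Dirichlet \(H^{1/2}\)-type energy of \(\mathbf 1_I\). In the sine eigenbasis of \(K\), the \(j\)-th term carries the factor \([\cos(\pi ja/(n+1))-\cos(\pi jb/(n+1))]^2\). This factor averages to \(1\) for bulk intervals, but at \(I=[1,n]\) it equals \([1-(-1)^j]^2\), with average \(2\). Hence the mean is \(\frac2\pi s\log n/\sqrt n\), the value at \([1,n]\) is twice that, and the value at \([1,b]\) with generic \(b\asymp n\) is about \(3/2\) times it. The excess decays only logarithmically as an endpoint moves away from \(1\) or \(n\). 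Because the two-excitation diagonal is nonnegative, the displayed bound asymptotically exceeds \(F(t)\) by at least \(A(t)\bigl(2\sqrt t/\pi-\sqrt2\,t/\pi^2\bigr)\), which is of order \(\sqrt t\) for small \(t\).

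Your remedy cannot close this. You propose near-constancy off an \(O(n^2/\log n)\) boundary set, with that set absorbed by the retained-family Gram dual bound. That transfer costs an absolute error comparable to the deleted prior mass. This is harmless when the limit is of order one, but here \(P_X\asymp F(t)/M_n\), so the cost is of order \(n^2/\log n\) in units of \(M_nP\). Even the splitting inequality \(M_nP_{\rm opt}(G)\le|\mathcal D|P_{\rm opt}(G_{\mathcal D})+|\mathcal B|P_{\rm opt}(G_{\mathcal B})\) adds at least \(1\), because each term is at least \(1\); at \(t=0\) it returns \(2\) against \(F(0)=1\). Moreover, a retained family whose endpoints avoid \(1\) and \(n\) has its own Dirichlet edges, where its square-root diagonal is elevated again. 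What is missing is one dual certificate that covers every label, including the endpoint layers, with total excess trace \(o(1/M_n)\). This is the role of the paper's sectorwise dual constructions with endpoint control in the Supplemental Material.

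Two further steps also need repair.
\begin{itemize}
\item The two-excitation kernel \(\min(x,x')\,(n+1-\max(y,y'))\) is a tensor product compressed to the triangle \(x<y\), so its square-root trace does not factor. A full-square double harmonic sum would give \(2\sqrt2\,t/\pi^2\). The halving to \(\sqrt2\,t/\pi^2\) is the triangular Volterra trace law, which you assert rather than derive.
\item The Sylvester bound \eqref{eq:sqrt-trace-perturbation} relied on a uniform spectral gap, which \(G_{n,t}\) lacks because its relevant eigenvalues spread over many scales toward zero. For the lower bound, use \(\operatorname{tr}|V|\ge\operatorname{Re}\operatorname{tr}(Y^*V)\) with a contraction \(Y\) built from truncated sector polar factors.
\end{itemize}
Finally, your \(k\ge3\) count is misstated. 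The \(k\)-subset kernel factors through the hull with multiplicities \(\binom{d-1}{k-2}\), giving relative size of order \(t^{k/2}L_n^{2-k}\) up to \(k\)-dependent constants. The conclusion survives, and the paper sidesteps the issue by grouping all \(k\ge2\) through \(Z_nW_n\), whose hull weights tend to one in this window.
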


The moving overlap problem has two asymptotic regimes. In the compact
critical window \(\tau_n=O(1)\), the vacuum and one excitation
contributions remain comparable to the interval hull contribution. When
\(n p_1(c_n)\to\infty\), the interval hull dominates and
\[
P_X(G_n(c_n))\sim p_1(c_n)^2.
\]
These two regimes are established in
Theorems~\ref{thm:unknown-length-ultracritical}
and~\ref{thm:unknown-length-hull-dominance}, respectively.

\paragraph{Proof idea.}
For \(0<c_n\leq1\), set
\[
q_n=\frac{\sqrt{1-c_n^2}}{c_n},
\qquad
(D_n)_{I,I}=c_n^{|I|},
\]
and decompose the interval states into the vacuum sector, the sector with one
excitation, and the sectors with two or more excitations. The two nonvacuum
parts have the exact factorizations
\begin{equation}\label{eq:critical-sector-factorization}
A_{1,n}=q_nD_nB_{1,n},
\qquad
A_{\geq2,n}=q_n^2D_nZ_nW_n,
\end{equation}
where \(B_{1,n}\) is the interval-site incidence matrix, \(Z_n\) is the
proper-hull incidence matrix, and \(W_n\) contains the physical hull weights.

In the critical window, the vacuum, one excitation, and higher excitation
sectors contribute the leading amplitudes
\[
1,\qquad
\frac{2\sqrt t}{\pi},\qquad
\frac{\sqrt2\,t}{\pi^2},
\]
whose sum is \(A(t)\). A global square root trace estimate gives the lower
bound, while sectorwise Bayes dual bounds give the matching upper bound.
Together with
\(P_{\rm tr}\leq P_{\rm SRM}\leq P_{\rm opt}\), these estimates give
\[
A(t)-\varepsilon_n
\leq
\sqrt{M_nP_X(G_{n,t})}
\leq
A(t)+\varepsilon_n,
\qquad
\varepsilon_n
=
O_T\!\left(\frac{\log L_n}{L_n}\right),
\]
uniformly for \(0\leq t\leq T\) and
\(X\in\{\mathrm{tr},\mathrm{SRM},\mathrm{opt}\}\). Since
\(L_n=\log n\), squaring gives
\eqref{eq:critical-uniform-rate}. Continuity of \(F=A^2\) gives the remaining
statements. The incidence trace estimates and the Bayes dual constructions
are proved in the Supplemental Material.

\begin{lemma}[Inner matching to the fixed overlap law]
Suppose

\begin{equation}\label{eq:critical-inner-window}
 \tau_n=n(1-c_n)L_n^2\longrightarrow\infty,
 \qquad n(1-c_n)=\frac{\tau_n}{L_n^2}\longrightarrow0.
\end{equation}

Then, for \(X\in\{\mathrm{tr},\mathrm{SRM},\mathrm{opt}\}\),

\begin{equation}\label{eq:critical-inner-law}
 M_nP_X(G_n)\sim\frac{2\tau_n^2}{\pi^4}\sim A(\tau_n)^2.
\end{equation}

Moreover, this leading term agrees with the endpoint expansion of the
fixed overlap law:

\begin{equation}\label{eq:critical-inner-p1-matching}
 M_np_1(c_n)^2\sim\frac{2\tau_n^2}{\pi^4}.
\end{equation}
\end{lemma}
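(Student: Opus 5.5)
The plan is to treat the two asymptotic equivalences in \eqref{eq:critical-inner-law} separately from the matching statement \eqref{eq:critical-inner-p1-matching}. The second equivalence in \eqref{eq:critical-inner-law} is elementary: since \(A(t)=\sqrt2\,t/\pi^2+O(\sqrt t)\) as \(t\to\infty\), we get \(A(\tau_n)^2\sim 2\tau_n^2/\pi^4\) once \(\tau_n\to\infty\).

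For \eqref{eq:critical-inner-p1-matching}, write \(\varepsilon_n=1-c_n=\tau_n/(nL_n^2)\). We use the endpoint expansion \(K(m)=\tfrac12\log\frac{16}{1-m}+o(1)\) as \(m\to1\), together with \(1-c_n^2\sim2\varepsilon_n\) and \(1-c_n^4\sim4\varepsilon_n\). These give
\[
p_1(c_n)\sim\frac{4\cdot2\varepsilon_n}{\pi^2}\cdot\frac14\log^2\frac1{\varepsilon_n}.
\]
Both hypotheses in \eqref{eq:critical-inner-window} are needed for the logarithm. On one side, \(n\varepsilon_n\to0\) gives \(\log(1/\varepsilon_n)\ge \log n\,(1-o(1))\). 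On the other side, \(\tau_n\to\infty\) gives \(\varepsilon_n\ge 1/(nL_n^2)\) eventually, so \(\log(1/\varepsilon_n)\le\log n+2\log\log n\). Hence \(\log(1/\varepsilon_n)\sim L_n\), and therefore \(p_1(c_n)\sim 2\varepsilon_nL_n^2/\pi^2\). Multiplying \(p_1(c_n)^2\) by \(M_n\sim n^2/2\) gives \(2(n\varepsilon_nL_n^2)^2/\pi^4=2\tau_n^2/\pi^4\).

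For the main statement \(M_nP_X(G_n)\sim2\tau_n^2/\pi^4\), I would rerun the sector argument behind Theorem~\ref{thm:unknown-length-ultracritical}, using the factorizations \eqref{eq:critical-sector-factorization}. The aim is relative rather than additive errors, because \(\tau_n\) is no longer bounded.

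\emph{Lower bound.} The sectors are mutually orthogonal, so \(G_n\ge A_{\ge2,n}^*A_{\ge2,n}\). Operator monotonicity of the square root then gives \(\operatorname{tr}\sqrt{G_n}\ge\operatorname{tr}|A_{\ge2,n}|\). Since \(n(1-c_n)\to0\), we have \(c_n^{|I|}\ge c_n^n=1-o(1)\) uniformly, so \(D_n=I+o(1)\) in operator norm. For the same reason, the physical hull weights \(W_n\) converge entrywise, with a uniform relative error \(O(n(1-c_n))\), to their \(c=1\) values. Hence
\[
\operatorname{tr}|A_{\ge2,n}|=(1+o(1))\,q_n^2\operatorname{tr}|Z_nW_n^{(1)}|,
\]
where \(W_n^{(1)}\) denotes the \(c=1\) hull weights and \(q_n^2\sim2(1-c_n)\). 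The incidence trace asymptotics from the Supplemental Material identify \(q_n^2\operatorname{tr}|Z_nW_n^{(1)}|/\sqrt{M_n}\) as \((\sqrt2\,t/\pi^2)(1+O(\log L_n/L_n))\) with \(t=n(1-c_n)L_n^2\). Because this quantity is homogeneous of degree one in \(q_n^2\), the estimate carries over to \(t=\tau_n\to\infty\) with a relative error that is uniform in \(t\). The trace bound \eqref{eq:trace-srm-bound} then gives \(\liminf M_nP_X/(2\tau_n^2/\pi^4)\ge1\) for all three \(X\).

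\emph{Upper bound.} Here I would use the sectorwise Bayes dual bound already used in the critical window, namely \(\sqrt{M_nP_{\rm opt}}\le\sum_{\text{sectors}}(\text{sector contribution})\). The vacuum sector contributes \(1\). The one-excitation sector contributes \(O(\sqrt{\tau_n})\), by the same incidence estimate for \(B_{1,n}\) scaled by \(q_n\). The hull sector contributes \((\sqrt2\tau_n/\pi^2)(1+o(1))\), by the uniform closeness of \(D_n\) and \(W_n\) to their \(c=1\) forms. The first two contributions are \(o(\tau_n)\), so \(\sqrt{M_nP_{\rm opt}}\le(\sqrt2\tau_n/\pi^2)(1+o(1))\). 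Combined with \(P_{\rm tr}\le P_{\rm SRM}\le P_{\rm opt}\), this proves \eqref{eq:critical-inner-law}.

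The main obstacle is the uniformity claim in both bounds. The Supplemental estimates are phrased with \(O_T\) additive errors on compact \(t\)-intervals, so one must check that each sector bound scales exactly as a power of \(q_n\) times a \(c\)-independent incidence quantity, up to the factor \(1+O(n(1-c_n))\) coming from \(D_n\) and \(W_n\). That is where the hypothesis \(n(1-c_n)\to0\) is used. I would first establish this homogeneity explicitly for the hull-sector dual certificate, since the vacuum and one-excitation sectors only need crude \(o(\tau_n)\) bounds.
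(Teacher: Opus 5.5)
Your proposal is correct and follows essentially the same route as the paper. When $n(1-c_n)\to0$, both $D_n$ and the hull weights $W_n$ tend uniformly to the identity. The triangular Volterra trace law for the hull sector then gives the leading amplitude $\sqrt2\,\tau_n/\pi^2$: you obtain the lower bound from the trace benchmark, and the upper bound from the sectorwise dual, with the vacuum contributing $O(1)$ and the one-excitation sector $O(\sqrt{\tau_n})$ as lower-order terms. The matching \eqref{eq:critical-inner-p1-matching} follows from the logarithmic endpoint expansion of $K$ together with $\log(1/(1-c_n))\sim\log n$. Your homogeneity remark is the right bridge between the two regimes: each sector is a power of $q_n$ times a $c$-independent incidence quantity, up to congruence by $D_n,W_n=I+o(1)$ in operator norm, so the compact-window sector estimates can be reused with relative rather than additive errors.
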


\paragraph{Proof idea.}
Under \eqref{eq:critical-inner-window}, the physical hull weights converge
uniformly to one. The triangular Volterra trace law and its matching Bayes
dual bound then give the leading amplitude
\(\sqrt2\,\tau_n/\pi^2\) in the sector with two or more excitations. The
vacuum and one excitation amplitudes are only \(O(1)\) and
\(O(\sqrt{\tau_n})\), so they are lower order as \(\tau_n\to\infty\).
The sector trace and dual bounds therefore give
\eqref{eq:critical-inner-law}.

The endpoint expansion
\[
p_1(c)^2
=
\frac{4(1-c)^2}{\pi^4}
\log^4\!\left(\frac{8}{1-c}\right)
[1+o(1)]
\]
and the identity
\(1-c_n=\tau_n/(nL_n^2)\) then give
\eqref{eq:critical-inner-p1-matching}. The uniform estimates needed for this
joint limit are proved in the Supplemental Material.

\begin{theorem}[Hull dominance criterion]
\label{thm:unknown-length-hull-dominance}
Suppose

\begin{equation}\label{eq:hull-dominance-assumptions}
 \begin{aligned}
 c_n&\to1,\qquad c_n<1\ \text{eventually},\\
 h_n&:=n p_1(c_n)\to\infty.
 \end{aligned}
\end{equation}

No monotonicity condition on \(c_n\) is imposed. Simultaneously for
\(X\in\{\mathrm{tr},\mathrm{SRM},\mathrm{opt}\}\), under the uniform prior
\(1/M_n\) on all nonempty intervals,

\begin{equation}\label{eq:hull-dominance-law}
 \frac{P_X(G_n(c_n))}{p_1(c_n)^2}\longrightarrow1.
\end{equation}
\end{theorem}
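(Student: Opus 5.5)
The proof will be a sandwich between a lower bound on the trace benchmark and an upper bound on the Bayes optimum, exactly as in the proof idea for Theorem~\ref{thm:unknown-length-ultracritical}. Since \(P_{\rm tr}\le P_{\rm SRM}\le P_{\rm opt}\) by \eqref{eq:trace-srm-bound}, it suffices to prove two asymptotic statements:
\[
\frac{\operatorname{tr}\sqrt{G_n}}{M_n}\ge p_1(c_n)\,[1-o(1)],
\qquad
P_{\rm opt}(G_n)\le p_1(c_n)^2\,[1+o(1)].
\]
Both will come from the exact excitation-number decomposition \eqref{eq:critical-sector-factorization}. We write
\[
G_n=A_{0,n}^*A_{0,n}+A_{1,n}^*A_{1,n}+A_{\ge2,n}^*A_{\ge2,n}.
\]
Here \(A_{0,n}\) is the vacuum row \((c_n^{|I|})_I\). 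The one-excitation sector is \(q_nD_nB_{1,n}\). The higher sector is \(q_n^2D_nZ_nW_n\), indexed by proper hulls \([a,b]\) with \(a<b\).

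\emph{Upper bound.} We build a Bayes dual certificate sectorwise. Because the Gram dual bound is subadditive over an orthogonal direct sum of the candidate vectors' components, we have \(\sqrt{P_{\rm opt}(G_n)}\le\sum_s\sqrt{P_{\rm opt}\text{-type bound for sector }s}\). More precisely, the trace norm of each sector's Gram dual operator adds. The vacuum sector is rank one and contributes amplitude at most \(M_n^{-1/2}\). The one-excitation sector has rank at most \(n\), so its contribution is bounded by \(\sqrt{n/M_n}\cdot O(1)=O(n^{-1/2})\) in amplitude. The hull sector, after factoring out \(D_n\), is a triangular two-endpoint kernel with weights \(W_n\). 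On intervals of length \(\ge L\), with \(L\) slowly growing so that \(c_n^{L}\to1\) is not needed, we compare it with the separable kernel \(c_n^{|a-a'|+|b-b'|}\) as in Theorem~\ref{thm:unknown-length}. We then apply the Følner comparison transfer, but with the symbol normalized by \(p_1(c_n)\): the ratio \(\sqrt{F_{c_n}}/\alpha_{c_n}\). The assumption \(h_n=np_1(c_n)\to\infty\) is exactly what makes the hull amplitude \(p_1(c_n)\) dominate the vacuum and one-excitation amplitudes \(O(n^{-1})\) and \(O(n^{-1/2})\), up to constant factors relative to \(\sqrt{M_n}\). Indeed, \(p_1(c_n)\gg 1/n\) gives \(p_1(c_n)\gg M_n^{-1/2}\), and the one-excitation amplitude, once properly computed with the \(q_n\) weight, is of order \(\sqrt{q_n^2\cdot\text{stuff}}/n\), which is also \(o(p_1(c_n))\).

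\emph{Lower bound.} We lower-bound \(\operatorname{tr}\sqrt{G_n}\) by \(\operatorname{tr}\sqrt{A_{\ge2,n}^*A_{\ge2,n}}=\|A_{\ge2,n}\|_1\), using operator monotonicity of the square root. We then evaluate the trace norm of the hull factor via the Volterra/triangular trace law from the Supplemental Material. It must be shown to equal \(M_n p_1(c_n)[1+o(1)]\) uniformly along any sequence \(c_n\to1\); this is why no monotonicity is needed, since the estimate is a uniform statement in \(c\) on the range \(np_1(c)\ge h\). The endpoint expansion \(p_1(c)\sim 2(1-c)\log^2(8/(1-c))/\pi^2\) identifies the Volterra trace asymptotics with \(p_1(c_n)\).

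\emph{Main obstacle.} The hard step is the uniformity of both the Følner transfer and the hull trace law as \(c_n\to1\). The symbol \(g_{c_n}\) then loses its positive lower bound, with \(\min g_c=(1-c)/(1+c)\), and the kernel's correlation length \((1-c_n)^{-1}\) may be comparable to \(n\). Fixed-\(c\) compactness therefore fails. We would first split into the regime \(n(1-c_n)\to\infty\), where the correlation length is \(o(n)\) and a macroblock partition of the triangle into squares of side \(\gg(1-c_n)^{-1}\) makes boundary and tail errors negligible relative to \(p_1(c_n)\). The second regime, \(n(1-c_n)=O(1)\) but \(np_1(c_n)\to\infty\), is handled by the inner matching lemma \eqref{eq:critical-inner-law}–\eqref{eq:critical-inner-p1-matching}. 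A subsequence argument combines the two regimes.
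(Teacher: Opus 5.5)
\textbf{Gap: the continuum regime is not covered.} Your architecture matches the paper's in outline: a trace lower bound via \(\operatorname{tr}\sqrt{G_n}\ge\|A_{\ge2,n}\|_1\), sectorwise Bayes dual upper bounds, and a subsequence reduction in \(\lambda_n=n(1-c_n)\). However, your case split does not cover all subsequences.

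You send every subsequence with \(n(1-c_n)=O(1)\) to the inner matching lemma. Its hypothesis \eqref{eq:critical-inner-window} requires \(n(1-c_n)=\tau_n/L_n^2\to0\). That condition is what makes the physical hull weights (powers \(c_n^{k}\) with \(k\) up to order \(n\)) converge uniformly to one, so that the unweighted triangular Volterra trace law and its dual apply.

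Now take a subsequence with \(n(1-c_n)\to\lambda\in(0,\infty)\). The theorem's hypotheses hold there, since \(np_1(1-\lambda/n)\sim(2\lambda/\pi^2)(\log n)^2\to\infty\). But the weights tend to macroscopic profiles like \(e^{-\lambda k/n}\) rather than to one, so the inner lemma does not apply. Your macroblock argument also fails: the correlation length \((1-c_n)^{-1}=n/\lambda\) is comparable to the whole endpoint triangle, so squares of side \(\gg(1-c_n)^{-1}\) do not fit.

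The paper treats this as a separate third case, the compact continuum interface estimate \eqref{eq:continuum-interface-law}, uniform for \(\lambda\) in compact subsets of \((0,\infty)\). You need an estimate of that kind. Show that, with the macroscopic weights present, both \(\|A_{\ge2,n}\|_1/M_n\) and a matching hull-sector dual bound still equal \(\frac{2\lambda}{\pi^2 n}(\log n)^2[1+o(1)]\), uniformly on compact \(\lambda\)-sets. The intuition is that the weights only affect the \(O(1)\) coarsest endpoint scales, out of the \(\asymp\log n\) scales that generate each logarithmic factor. Without this case, the subsequence criterion cannot be invoked.

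\textbf{Secondary issues.} First, your rank bound \(O(n^{-1/2})\) on the one-excitation amplitude is not \(o(p_1(c_n))\) under \(np_1(c_n)\to\infty\) alone. In the inner and continuum regimes \(p_1(c_n)\asymp\tau_n/n\ll n^{-1/2}\), and the same happens when \(\lambda_n\to\infty\) slowly. The hypothesis is exactly what suppresses the vacuum sector, but not the one-excitation sector. There you need its true trace-norm scale, \(\|A_{1,n}\|_1/M_n\asymp\sqrt{\tau_n}/n\) in those regimes, together with a dual certificate of the same order. This amplitude is smaller than the hull amplitude only by a factor \(\asymp\tau_n^{-1/2}\), not by a power of \(n\).

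Second, in the case \(\lambda_n\to0\) you should check explicitly that \(np_1(c_n)\to\infty\) forces \(\tau_n\to\infty\), using the endpoint expansion of \(p_1\), since the inner lemma requires it.
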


\paragraph{Proof idea.}
Set
\[
\lambda_n=n(1-c_n).
\]
Every subsequence has a further subsequence on which \(\lambda_n\) tends to
zero, to a finite positive limit, or to infinity. If
\(\lambda_n\to0\), the endpoint expansion of \(p_1\), together with
\(n p_1(c_n)\to\infty\), forces \(\tau_n\to\infty\), and the inner matching
result applies. A uniform continuum estimate covers the case
\(\lambda_n\to\lambda\in(0,\infty)\), while the moving outer estimate covers
\(\lambda_n\to\infty\). Each regime gives
\eqref{eq:hull-dominance-law}. The subsequence criterion then proves the full
limit without a monotonicity assumption on \(c_n\). The uniform estimates for
the three regimes are established in the Supplemental Material.

The compact continuum interface proposition also gives the stronger uniform
statement that, for every compact set \(K\subset(0,\infty)\),

\begin{equation}\label{eq:continuum-interface-law}
 \sup_{\lambda\in K}\max_X
 \left|
 \frac{M_nP_X\bigl(G_n(1-\lambda/n)\bigr)}{(\log n)^4}
 -\frac{2\lambda^2}{\pi^4}
 \right|\longrightarrow0.
\end{equation}

The hull dominance criterion also includes the moving outer regime
\(n(1-c_n)\to\infty\). Conditions
\eqref{eq:hull-dominance-assumptions} are sufficient; whether they are
necessary remains open.

\section{No change extension: joint detection and exact localization}
\label{joint-detection-and-exact-localization}

We now allow the reference state

\[
 |\Omega\rangle=|0\rangle^{\otimes n}
\]

as a no change hypothesis \(H_0\) with fixed prior
\(\pi_0\in(0,1)\). The overlap parameter and \(\pi_0\) do not scale with
\(n\). An explicit no change label also appears in the at-most-one- and
at-most-two-change formulations of Ref.~\cite{mohan2023}; here the alternative
is one nonempty returning interval.

For known length, the complete joint prior is

\begin{equation}\label{eq:h0-known-joint-prior}
\Pr(H_0)=\pi_0,
\qquad
\Pr(a)=\frac{1-\pi_0}{N},
\quad 1\le a\le N.
\end{equation}

When the interval length is unknown, the joint prior is

\begin{equation}\label{eq:h0-unknown-joint-prior}
\Pr(H_0)=\pi_0,
\qquad
\Pr(I)=\frac{1-\pi_0}{M_n},
\quad I\in\mathcal I_n.
\end{equation}

Thus the anomaly label is conditionally uniform. Success requires reporting
\(H_0\) for \(|\Omega\rangle\) and the exact interval otherwise. We write
\(P_{\rm joint}^*\) for the Bayes optimum over all POVMs. For a pure-state
ensemble with priors \(p_j\), let

\begin{equation}\label{eq:weighted-gram-srm}
W_{jk}=\sqrt{p_jp_k}\langle\phi_j|\phi_k\rangle,
\qquad
\mathcal P(W)=\sum_j\left|(\sqrt W)_{jj}\right|^2.
\end{equation}

This is the success probability of the prior weighted square root measurement.
For a uniform prior on \(M\) labels, \(W=G/M\) and
\(\mathcal P(W)=P_{\rm SRM}(G)\). We denote the success probability for the
ensemble augmented by \(H_0\) by \(P_{\rm wSRM}^{H_0}\).

\begin{theorem}[Joint localization with a no change hypothesis]
\label{thm:h0-localization}
Fix \(\pi_0\in(0,1)\).

\emph{Fixed known length.}
Under the joint prior \eqref{eq:h0-known-joint-prior}, for fixed \(i\) and
\(0\le r\le1\), as \(N\to\infty\),

\begin{equation}\label{eq:h0-fixed-limit}
P_{\rm joint}^*(N,i,r,\pi_0)
\longrightarrow
\pi_0+(1-\pi_0)P_\infty(i,r).
\end{equation}

\emph{Growing known length.}
For fixed \(0\le c\le1\), if \(i_n\to\infty\) and \(N_n\to\infty\),
then

\begin{equation}\label{eq:h0-growing-limit}
P_{\rm joint}^*
\longrightarrow
\pi_0+(1-\pi_0)p_1(c^2).
\end{equation}

\emph{Unknown length.}
For fixed \(0\le c\le1\), under the joint prior
\eqref{eq:h0-unknown-joint-prior}, as \(n\to\infty\),

\begin{equation}\label{eq:h0-unknown-limit}
P_{\rm joint}^*(n,c,\pi_0)
\longrightarrow
\pi_0+(1-\pi_0)p_1(c)^2.
\end{equation}

Moreover, for fixed \(0\le c<1\), the prior weighted square root measurement
attains the joint Bayes limit in the two diverging-length models:

\begin{equation}\label{eq:h0-growing-weighted-srm-limit}
\begin{aligned}
P_{\rm wSRM}^{H_0}
&\longrightarrow\pi_0+(1-\pi_0)p_1(c^2),\\
P_{\rm joint}^*-P_{\rm wSRM}^{H_0}
&\longrightarrow0
\qquad(i_n,N_n\to\infty),
\end{aligned}
\end{equation}

and

\begin{equation}\label{eq:h0-unknown-weighted-srm-limit}
\begin{aligned}
P_{\rm wSRM}^{H_0}
&\longrightarrow\pi_0+(1-\pi_0)p_1(c)^2,\\
P_{\rm joint}^*-P_{\rm wSRM}^{H_0}
&\longrightarrow0
\qquad(n\to\infty).
\end{aligned}
\end{equation}
\end{theorem}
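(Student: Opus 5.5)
The plan is to prove matching upper and lower bounds for \(P_{\rm joint}^*\) by comparing the augmented ensemble with the localization problems already solved. Write \(\Lambda\) for the conditional localization limit in each model: \(P_\infty(i,r)\) from Theorem~\ref{thm:fixed-length}, \(p_1(c^2)\) from Theorem~\ref{thm:long-known}, or \(p_1(c)^2\) from Theorem~\ref{thm:unknown-length}.

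\emph{Upper bound.} Take any joint POVM \(\{E_0\}\cup\{E_I\}\). Add \(E_0\) to an arbitrary anomaly outcome. The result is a POVM for the anomaly ensemble alone, and its anomaly success is at least \(\sum_I p_I\langle\Phi_I|E_I|\Phi_I\rangle\). Since also \(\langle\Omega|E_0|\Omega\rangle\le1\), we get
\[
P_{\rm joint}^*\le\pi_0+(1-\pi_0)P_{\rm opt}(G).
\]
The three localization theorems then give \(\limsup P_{\rm joint}^*\le\pi_0+(1-\pi_0)\Lambda\).

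\emph{Lower bound.} Let \(P_\Omega=|\Omega\rangle\langle\Omega|\) and \(P_\perp=1-P_\Omega\). Given any anomaly POVM \(\{F_I\}\), use the joint POVM
\[
E_0=P_\Omega,\qquad E_I=P_\perp F_IP_\perp .
\]
The no change hypothesis is then identified with certainty. Each anomaly state contributes \((1-c^{2|I|})\langle\xi_I|F_I|\xi_I\rangle\), where \(\xi_I\) is its normalized residual.
\begin{itemize}
\item \emph{Fixed known length.} Here \(c^{2|I|}=r^i\) is constant. The success is exactly \(\pi_0+(1-\pi_0)(1-r^i)P_{\rm opt}(R_{N,i})\), and the Residual Toeplitz family lemma gives the limit \(\pi_0+(1-\pi_0)P_\infty(i,r)\).
\item \emph{Growing known length.} Here \(r^{i_n}\to0\), so the residual and physical ensembles differ negligibly in trace distance, and the same construction gives the limit.
\item \emph{Unknown length, \(0<c<1\).} The per-label loss relative to the original anomaly success is at most \(c^{|I|}\) (gentle-measurement/trace-distance bound). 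Its prior average satisfies \(M_n^{-1}\sum_I c^{|I|}\le n(1-c)^{-1}/M_n\to0\).
\item \emph{Endpoints.} For \(c=0\) the states are orthogonal. For \(c=1\) (or \(r=1\)) every state coincides, so the optimum is \(\max(\pi_0,(1-\pi_0)/N)\to\pi_0\), which equals \(\pi_0+(1-\pi_0)\cdot0\).
\end{itemize}

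\emph{Weighted SRM.} Order the weighted Gram matrix as
\[
W=\begin{pmatrix}\pi_0&b^*\\ b&(1-\pi_0)G/M\end{pmatrix},\qquad b_I=\sqrt{\pi_0(1-\pi_0)/M}\,c^{|I|},
\]
and let \(W_d\) be its block-diagonal part. Then \(\mathcal P(W_d)=\pi_0+(1-\pi_0)P_{\rm SRM}(G)\), which tends to the target by Theorems~\ref{thm:long-known} and~\ref{thm:unknown-length}. The key estimate is \(\|b\|_2\to0\) for fixed \(c<1\):
\begin{itemize}
\item for growing known length, \(\|b\|_2^2\le\pi_0 c^{2i_n}\to0\);
\item for unknown length, \(\|b\|_2^2\le\pi_0 n/((1-c^2)M_n)\to0\).
\end{itemize}
The difference \(W-W_d\) has rank two and trace norm \(2\|b\|_2\). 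The Powers--St{\o}rmer inequality then gives \(\|\sqrt W-\sqrt{W_d}\|_{\rm HS}^2\le2\|b\|_2\). Let \(x_j=(\sqrt{W_d})_{jj}\) and \(d_j=(\sqrt W-\sqrt{W_d})_{jj}\). Cauchy--Schwarz, with \(\sum_j x_j^2=\mathcal P(W_d)\le1\), yields
\[
|\mathcal P(W)-\mathcal P(W_d)|\le2\|d\|_2+\|d\|_2^2\to0 .
\]
Combined with the Bayes limits above, this proves \eqref{eq:h0-growing-weighted-srm-limit} and \eqref{eq:h0-unknown-weighted-srm-limit}.

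The main obstacle is this last perturbation step. The anomaly block has eigenvalues of order \(1/M\), so Sylvester-type square-root bounds that need a spectral gap are useless. I would therefore rely on the gap-free Powers--St{\o}rmer estimate together with the vanishing of \(\|b\|_2\). That vanishing also explains why the weighted SRM claim is restricted to the diverging-length models: in the fixed known length model \(\|b\|_2^2=\pi_0r^i\) does not vanish.
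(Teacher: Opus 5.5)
Your proposal is correct and follows essentially the paper's argument: an upper bound by merging $E_0$ into an anomaly outcome, a lower bound that reserves $\lvert\Omega\rangle$ for $H_0$ (with the residual Toeplitz lemma for fixed length), and a weighted SRM transfer driven by the vanishing mean squared reference overlap $\|b\|_2^2$, where your Powers--St{\o}rmer estimate is a suitable gap-free form of the paper's square root stability step. The one deviation is for unknown length, where you average the per-label loss over all intervals rather than first discarding intervals shorter than order $\log n$ as the paper does; this works, although your trace distance argument gives a per-label loss of $c^{\lvert I\rvert}+c^{2\lvert I\rvert}$ rather than $c^{\lvert I\rvert}$, which is harmless because the prior average still vanishes.
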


\paragraph{Proof idea.}
For any joint POVM, the contribution from \(H_0\) is at most \(\pi_0\), and
the remaining measurement operators can be completed to a POVM for the
conditional interval ensemble. This gives the upper bound. For the reverse
bound, we reserve \(\lvert\Omega\rangle\) for \(H_0\) and apply an optimal
localization measurement to the normalized residual states in
\(\lvert\Omega\rangle^\perp\). The residual Toeplitz limit gives the fixed
known length result. For unknown length, the same construction is applied
after removing intervals shorter than \(m_n\asymp\log n\), whose total prior
mass vanishes. These bounds give the three affine Bayes limits.

For the prior weighted SRM, the only coupling between the \(H_0\) and anomaly
blocks of the weighted Gram matrix is the vector of reference overlaps. The
corresponding mean squared overlap is \(c^{2i_n}\) for a growing known length
and \(O_c(n^{-1})\) for unknown length. Both vanish for fixed \(c<1\), so
stability of the positive square root transfers the conditional SRM limits to
the joint ensemble. Complete proofs and quantitative bounds are given in the
Supplemental Material.

For a fixed known length, the mean squared reference overlap is \(c^{2i}\)
and does not vanish with \(N\). The same decoupling argument therefore does
not apply to the prior-weighted SRM in this regime.

\begin{proposition}[Unit overlap and MAP postprocessing]
\label{prop:h0-unit-map}
At \(c=1\), let \(K=N\) for known length and \(K=M_n\) for unknown length.
Then, exactly at finite size,

\begin{equation}\label{eq:h0-weighted-srm-unit-overlap}
\begin{aligned}
P_{\rm wSRM}^{H_0}
&=\pi_0^2+\frac{(1-\pi_0)^2}{K},\\
P_{\rm joint}^*
&=\max\!\left\{\pi_0,\frac{1-\pi_0}{K}\right\}.
\end{aligned}
\end{equation}

Consequently, as \(K\to\infty\),

\begin{equation}\label{eq:h0-weighted-srm-unit-gap}
\begin{aligned}
P_{\rm wSRM}^{H_0}&\to\pi_0^2,\\
P_{\rm joint}^*-P_{\rm wSRM}^{H_0}
&\to\pi_0(1-\pi_0)>0.
\end{aligned}
\end{equation}

For a POVM \(\{M_y\}_y\), maximum a posteriori (MAP) postprocessing assigns
outcome \(y\) to a label in

\[
 \mathop{\rm arg\,max}_j
 p_j\operatorname{tr}(\rho_jM_y),
\]

with an arbitrary fixed rule for ties. Applied after the prior weighted
square root measurement, this rule is asymptotically Bayes optimal for every
fixed \(0\le c\le1\), both when \(i_n,N_n\to\infty\) and when the interval
length is unknown and \(n\to\infty\).
\end{proposition}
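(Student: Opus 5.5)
At \(c=1\) every candidate, including \(|\Omega\rangle\), equals the single vector \(|\Omega\rangle\). The plan is to compute everything exactly from this degeneracy and then handle MAP postprocessing separately.

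\emph{Finite size formulas.} The weighted Gram matrix is
\(W=ww^{\top}\) with \(w=(\sqrt{\pi_0},\sqrt{(1-\pi_0)/K},\dots,\sqrt{(1-\pi_0)/K})\).
Since \(\|w\|^2=\pi_0+(1-\pi_0)=1\), \(W\) is a rank-one projector, so
\(\sqrt W=W\). The weighted SRM value is then
\[
\mathcal P(W)=\sum_j W_{jj}^2=\pi_0^2+K\frac{(1-\pi_0)^2}{K^2}.
\]
For the Bayes optimum, all density operators coincide with \(\rho=|\Omega\rangle\langle\Omega|\). For any POVM the success probability is \(\sum_j p_j\operatorname{tr}(\rho M_j)\le\max_j p_j\), and this bound is attained by the trivial measurement that always outputs the MAP label. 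This gives \eqref{eq:h0-weighted-srm-unit-overlap}. The limits in \eqref{eq:h0-weighted-srm-unit-gap} follow because \(\pi_0>(1-\pi_0)/K\) once \(K\) is large, so \(P_{\rm joint}^*=\pi_0\), and the gap is \(\pi_0-\pi_0^2+o(1)\).

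\emph{MAP postprocessing.} Let \(\{M_y\}\) be any POVM with outcome set equal to the label set, and let \(\widehat j(y)\) be the MAP reassignment. The MAP-postprocessed success is
\[
\sum_y\max_j p_j\operatorname{tr}(\rho_jM_y)\ \ge\ \sum_y p_y\operatorname{tr}(\rho_yM_y),
\]
so postprocessing never decreases success. For fixed \(c<1\), Theorem~\ref{thm:h0-localization} already shows that the weighted SRM is asymptotically Bayes optimal in both diverging models, namely \eqref{eq:h0-growing-weighted-srm-limit} and \eqref{eq:h0-unknown-weighted-srm-limit}. The postprocessed value is squeezed between \(P_{\rm wSRM}^{H_0}\) and \(P_{\rm joint}^*\), so it is asymptotically optimal as well.

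At \(c=1\), all states are equal, and \(\operatorname{tr}(\rho M_y)=\langle\Omega|M_y|\Omega\rangle\) is independent of \(j\). Hence MAP assigns every outcome to \(\operatorname{arg\,max}_jp_j\), which is \(H_0\) for \(K\) large (ties can only occur at finitely many \(K\)). The resulting success is \(\sum_y\pi_0\langle\Omega|M_y|\Omega\rangle=\pi_0\), which equals \(P_{\rm joint}^*\) exactly for large \(K\).

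The only point needing care is that the \(c<1\) case relies entirely on the earlier SRM optimality statement. So the main obstacle lies there, and here it reduces to the monotonicity inequality together with the squeeze.
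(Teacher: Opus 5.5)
Your proposal is correct and follows the paper's argument essentially step for step. The weighted Gram matrix is the rank-one projector \(W=uu^*\) with \(u_j=\sqrt{p_j}\), so \(\sqrt W=W\) and the weighted SRM gives \(\sum_jp_j^2\); identical states force \(P_{\rm joint}^*=\max_jp_j\); outcome-wise MAP dominance plus the squeeze from Theorem~\ref{thm:h0-localization} handles \(c<1\); and at \(c=1\) the MAP rule reports the highest-prior label on every outcome of positive probability. The one implicit step is standard: the MAP-relabeled rule is a coarse-grained POVM, so its success is at most \(P_{\rm joint}^*\). At \(c=1\) it in fact attains \(\max_jp_j=P_{\rm joint}^*\) exactly for every \(K\), not only for large \(K\).
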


\paragraph{Proof idea.}
At \(c=1\), all hypotheses represent the same physical state. If \(p_j\) are
their priors and \(u_j=\sqrt{p_j}\), then the weighted Gram matrix is the rank
one projection \(W=uu^*\). Hence \(\sqrt W=W\), the prior-weighted SRM has
success probability \(\sum_jp_j^2\), and the Bayes rule selects the label with
the largest prior. Substituting \(p_0=\pi_0\) and
\(p_j=(1-\pi_0)/K\) gives
Eqs.~\eqref{eq:h0-weighted-srm-unit-overlap}
and~\eqref{eq:h0-weighted-srm-unit-gap}.

For any POVM, MAP postprocessing maximizes the contribution of each
measurement outcome and therefore cannot reduce the total success
probability. For fixed \(c<1\), Theorem~\ref{thm:h0-localization} and the
Bayes upper bound squeeze the postprocessed success probability to the joint
optimum. At \(c=1\), the outcome distribution is independent of the true
label. For all sufficiently large \(K\), the MAP rule therefore reports
\(H_0\) and attains the Bayes optimum \(\pi_0\). The complete argument is
given in the Supplemental Material.

\section{Numerical results}
\label{numerical-verification}

The numerical calculations illustrate finite size behavior for the physical
Gram ensembles. Figure~\ref{fig:finite size} summarizes the results.
Panels~(a) and (b) compare SRM success with \(p_1(c^2)\) and \(p_1(c)^2\)
for a growing known length and unknown interval length, respectively. Panels~(b) shows stronger finite size effects at larger
overlap. The correlation length

\[
\xi(c)=|\log c|^{-1}
\]
is approximately \(99.5\) at \(c=0.99\), so even \(n=50\) has
\(n/\xi\approx0.503\). The finite size deviations at high overlap are
consistent with a slow approach to the asymptotic regime.
Panel~(c) displays \(U_{\rm safe}-P_{\rm SRM}\) over the complete grid of
small SDP instances computed in canonical Gram coordinates.
The numerical grid covers \(n=3,\ldots,7\) and

\[
c\in\{0.3,0.6,0.8,0.9,0.95,0.99\}.
\]

\begin{figure*}[t]
  \centering
  \includegraphics[width=\textwidth]{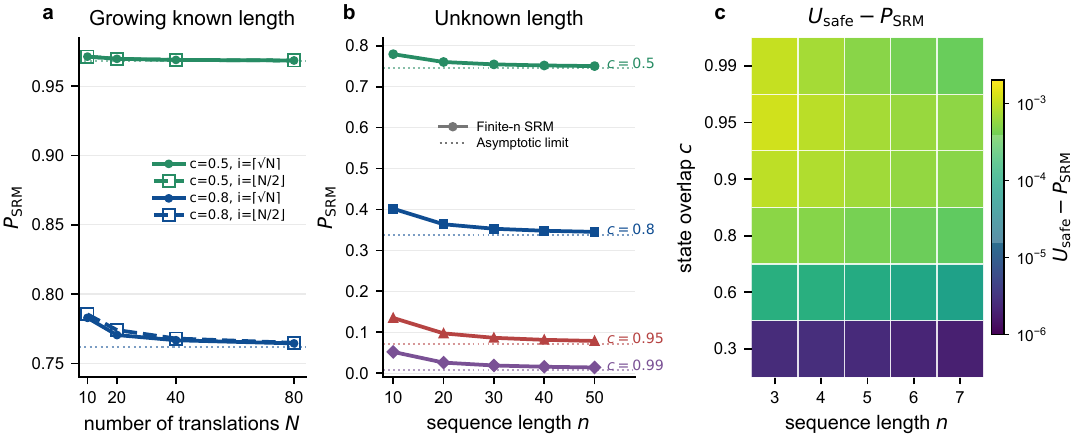}
  \caption{\textbf{Finite size SRM behavior and SDP diagnostics for small
  instances.}
  (a) SRM success for two schedules with a growing
  known length,
  \(i=\lceil\sqrt N\rceil\) and \(i=\lfloor N/2\rfloor\) at
  \(c=0.5,0.8\); dotted horizontal lines are \(p_1(c^2)\). Open squares are
  drawn around filled circles where the two schedules are nearly coincident.
  (b) Dense SRM computations using the full physical Gram matrix through
  \(n=50\) for a prior
  uniform over all nonempty intervals; dotted lines are \(p_1(c)^2\). The
  \(c=0.95,0.99\) curves show the stronger finite size effects at large
  overlap.
  (c) Logarithmic color map of the floating point diagnostic
  \(U_{\rm safe}-P_{\rm SRM}\) for the complete grid of 30 semidefinite
  programs, with
  \(n=3,\ldots,7\) and
  \(c\in\{0.3,0.6,0.8,0.9,0.95,0.99\}\).}
  \label{fig:finite size}
\end{figure*}

The postprocessed feasibility checks give

\[
 L_{\rm safe}\le P_{\rm opt}\le U_{\rm safe}.
\]

Because the implementation uses IEEE double precision, panel~(c) is a floating
point diagnostic rather than a rigorous numerical enclosure; the complete
postprocessing construction and roundoff limitations are given in the
Supplemental Material. Across the 30 cases with unknown interval length, the
largest raw absolute primal--dual gap is \(6.782\times10^{-9}\), while the
largest postprocessed floating point bracket width
\(U_{\rm safe}-L_{\rm safe}\) is \(1.025\times10^{-8}\). The largest rechecked
completeness residual is \(1.41\times10^{-15}\). The smallest postprocessed
primal eigenvalue and dual slack eigenvalue are respectively
\(3.4104\times10^{-13}\) and \(3.4102\times10^{-13}\). Dense SRM computations
using the full physical Gram matrix for the single-interval
model extend to \(n=50\), with candidate count \(M_n=1275\), over

\[
c\in\{0.5,0.8,0.9,0.95,0.99\}.
\]
For \(n>7\), only \(P_{\rm SRM}\) is computed. Calculations at larger \(n\)
therefore illustrate finite size SRM convergence; they neither estimate
\(P_{\rm opt}-P_{\rm SRM}\) nor independently establish SRM optimality.

\FloatBarrier

\section{Conclusion and discussion}
\label{sec:conclusion}

We have established exact asymptotic laws for minimum error localization of a
single returning quantum change interval under collective
measurements. For fixed \(c\) and a known length \(i\geq1\), the optimal
and SRM success probabilities under the uniform prior over
\(N=n-i+1\) translations converge to \(P_\infty(i,c^2)\) as
\(N\to\infty\). For fixed \(i\) and \(0<c<1\), their finite size gap
satisfies
\[
0\leq
P_{\rm opt}(G_{N,i})-P_{\rm SRM}(G_{N,i})
=
O_{i,c}\!\left(N^{-1/2}\right).
\]
This bound quantifies the convergence of the explicit SRM to the Bayes
optimum. At fixed \(c\), if \(i_n\to\infty\) and
\(N_n=n-i_n+1\to\infty\), both probabilities converge to \(p_1(c^2)\),
with no restriction on the relative growth of \(i_n\) and \(N_n\). 

When the interval length is unknown and the prior is uniform over all
\(M_n=n(n+1)/2\) nonempty intervals, both probabilities instead converge to
\(p_1(c)^2\) for fixed \(c\). Thus the SRM attains the collective Bayes
limit in all three localization regimes at fixed overlap. For an unknown interval length, the moving overlap analysis resolves the
nonuniform boundary at \(c=1\). The relevant scale is
\(\tau_n=n(1-c_n)(\log n)^2\). Theorem~\ref{thm:unknown-length-ultracritical} gives,
\[
M_nP_X(G_n(c_n))
=
F(\tau_n)
+
O_T\!\left(\frac{\log\log n}{\log n}\right).
\]
, \(X\in\{\mathrm{tr},\mathrm{SRM},\mathrm{opt}\}\). In a broader regime where the endpoint
hull dominates, Theorem~\ref{thm:unknown-length-hull-dominance} gives
\[
P_X(G_n(c_n))
\sim
p_1(c_n)^2
\]
whenever \(c_n\to1\), \(c_n<1\) eventually, and
\(n p_1(c_n)\to\infty\). This hull criterion applies across the inner,
continuum, and moving outer regimes.

The limits \(p_1(c^2)\) and \(p_1(c)^2\) reflect different geometries of the
hypothesis set. A growing interval of known length has a single translation
coordinate. Shifting the interval by one site changes one local state at each
boundary, producing the effective overlap \(c^2\). When the length is
unknown, the entry and return endpoints supply two bulk coordinates, and the
comparison kernel factorizes into one-dimensional kernels with parameter
\(c\), yielding \(p_1(c)^2\). The analytical strategy follows this geometry.
At fixed overlap, Toeplitz and circulant comparisons handle the known length
problem. For unknown length, local Toeplitz comparison, diagonal F{\o}lner
estimates, and perturbative Gram transfer control the physical Gram matrix on
the triangular endpoint domain despite its non-Toeplitz corrections. Near
\(c=1\), an exact decomposition by excitation number and a weighted hull
representation separate the vacuum and one excitation sectors from the
endpoint hull.

The no change extension preserves this asymptotic structure. In each model at
fixed overlap, adjoining a no change hypothesis with fixed prior
\(\pi_0\in(0,1)\) maps a conditional localization limit \(\Lambda\) to the
joint Bayes limit
\[
\pi_0+(1-\pi_0)\Lambda
\]
under the stated conditional prior on anomalous intervals. For every fixed
\(0\leq c<1\), the prior weighted SRM attains this limit both when the known
interval length grows and when the length is unknown. In these two settings,
maximum a posteriori postprocessing of the SRM outcomes is asymptotically
Bayes optimal for every fixed \(0\leq c\leq1\). In particular, at \(c=1\),
the unprocessed prior-weighted SRM tends to \(\pi_0^2\) as the number of
labels diverges, whereas MAP postprocessing recovers the Bayes limit
\(\pi_0\).

These results provide asymptotic benchmarks for unrestricted collective POVMs
on returning interval ensembles of known pure states. Local strategies have
been studied for permanent changes and isolated single anomalies
\cite{sentis2016,skotiniotis2024}. Whether local or sequential receivers
attain the collective limits derived here for a returning interval of fixed
length \(i\geq2\), a known length that grows with \(n\), or an unknown length
remains open. This structured localization problem differs from independently
drawn quantum sequence ensembles, for which optimal discrimination can factor
into fixed local measurements \cite{gupta2024sequence}. Natural extensions
include multiple returning intervals, mixed or correlated output states,
unknown or site-dependent anomaly states and overlaps, imperfect returns, and
restricted measurement architectures. The central question is whether the
geometry of the two endpoints continues to control the asymptotics beyond the
calibrated pure state model.

\section*{Author contributions}

Xu Chen conceived the initial idea, developed the physical model, and carried out the theoretical analysis. Xue Ma performed the numerical calculations and prepared the figures. Both authors contributed writing and revision of the manuscript. LLM Models were used for language polishing and cross checking numerical results. Both authors reviewed the  outputs and take full responsibility for the scientific content and the final manuscript.

\section*{Acknowledgements}

This work was supported by Hebei University of Science and Technology under
Starting Grant No.~81/1181298.

\bibliographystyle{quantum}
\bibliography{references}

\end{document}